\newtheorem{theorem}{Theorem}
\newtheorem{proposition}[theorem]{Proposition} 
\newtheorem{remark}{Remark}
\newcommand{\R}{\mathbb{R}}
\newcommand{\E}{\mathbb{E}}
\newcommand{\bI}{\bm{I}}
\newcommand{\Yt}{Y_t}
\newcommand{\sYt}{s_{\Yt}}
\newcommand{\sYtT}{s_{\Yt|T}}
\begin{document}

\title{Information Gradient for Nonlinear Gaussian Channel with Applications to Task-Oriented Communication}

\author{
\IEEEauthorblockN{Tadashi Wadayama}
\IEEEauthorblockA{Nagoya Institute of Technology\\
\texttt{wadayama@nitech.ac.jp}}
}

\maketitle

\begin{abstract}
We propose a gradient-based framework for optimizing parametric nonlinear Gaussian channels 
via mutual information maximization. Leveraging the score-to-Fisher bridge (SFB) methodology, 
we derive a computationally tractable formula for the information gradient that is the gradient 
of mutual information with respect to the parameters of the nonlinear front-end. 
Our formula expresses this gradient in terms of two key components: 
the score function of the marginal output distribution, 
which can be learned via denoising score matching (DSM), 
and the Jacobian of the front-end function, which is handled efficiently 
using the vector-Jacobian product (VJP) within automatic differentiation frameworks. 
This enables practical parameter optimization through gradient ascent.
Furthermore, we extend this framework to task-oriented scenarios, 
deriving gradients for both task-specific mutual information, where a task variable depends 
on the channel input, and the information bottleneck (IB) objective. 
A key advantage of our approach is that it facilitates end-to-end optimization 
of the nonlinear front-end without requiring explicit computation on the output distribution.
Extensive experimental validation confirms the correctness of our information gradient formula against 
analytical solutions and demonstrates its effectiveness in optimizing both linear and nonlinear 
channels toward their objectives.
\end{abstract}

\begin{IEEEkeywords}
mutual information, nonlinear Gaussianchannels, denoising score matching, gradient-based optimization,
semantic communication, information gradient, information bottleneck
\end{IEEEkeywords}

\section{Introduction}
\label{sec:introduction}

Modern communication and sensing systems increasingly incorporate components 
that exhibit significant nonlinearities, fundamentally impacting overall performance. 
Examples range from physical hardware limitations, such as radio frequency (RF) amplifier saturation 
in transmitters in wireless communication systems, 
which exhibit nonlinear gain characteristics at high input levels, 
to inherent distortions introduced during signal conversion, 
like those from quantization and analog-to-digital (ADC) and digital-to-analog (DAC) conversions. 
Furthermore, sophisticated signal processing blocks, 
including learned encoders in end-to-end communication systems for semantic communication 
or complex algorithms for sensor fusion, often involve nonlinear transformations.

These nonlinearities are frequently captured by a parametric model for the communication channel:
\begin{align}
Y_t = f_{\bm \eta}(X) + Z_t, \quad Z_t \sim \mathcal{N}(0, t \bm I_m), \label{eq:channel_model}
\end{align}
where $X \in \R^n$ is the input signal, $f_{\bm \eta}: \R^n \to \R^m$ represents 
the deterministic {\em nonlinear front-end} parameterized by 
$\bm \eta \in \R^d$, and $Z_t$ is additive white Gaussian noise (AWGN) with variance $t > 0$.

A fundamental goal in designing and operating such systems is to maximize 
the flow of information from the input $X$ to the output $Y_t$. 
This translates to the optimization problem:
\begin{align}
\label{eq:optimization_problem}
\text{maximize}_{\bm \eta \in \mathcal{F}}\ I(X; Y_t),
\end{align}
where $I$ denotes the {\em mutual information} (MI) and $\mathcal{F}$ 
is a feasible set for the parameters $\bm \eta$. 
Optimizing MI is crucial, as maximizing the mutual information 
with respect to the parameters of the nonlinear front-end 
is fundamental to achieving high data rates and robust performance, 
pushing towards the channel's limits under nonlinear distortions. 
Moreover, in emerging fields such as semantic communication, 
task-oriented sensing, and intelligent wireless systems, 
the ability to optimize these nonlinear transformations via MI maximization 
provides a principled route towards end-to-end learning and performance gains.

However, solving \eqref{eq:optimization_problem} presents significant challenges. 
Evaluating $I(X; \Yt)$ itself is often intractable for general nonlinear $f_{\bm \eta}$, 
posing considerable difficulty especially in high-dimensional settings.
More critically for optimization, 
obtaining a usable gradient of the mutual information with respect 
to the parameters $\bm \eta$ is known to be difficult. 
Conventional methods, such as kernel density estimation-based methods \cite{moon1995}, 
typically require access to the output distribution $p(Y_t)$, 
which are computationally demanding to approximate in high-dimensional settings.

This paper overcomes this hurdle by proposing a novel, computationally 
tractable framework for calculating and utilizing the {\em information gradient}, 
$\nabla_{\bm \eta} I(X; Y_t)$. Leveraging the score-to-Fisher bridge (SFB) methodology \cite{wadayama2025}, 
we derive an analytical expression for this gradient  that depends only on the 
marginal score function $s_{Y_t}(\bm y) = \nabla_{\bm y} \log p_{Y_t}(\bm y)$ of the channel output.
Crucially, this score function can be effectively learned from data samples using 
denoising score matching (DSM) learning methods \cite{hyvarinen2005,vincent2011,song2019,song2021}.
The gradient computation itself is handled efficiently via the vector-Jacobian product (VJP) identity
within standard automatic differentiation (AD) 
frameworks \cite{Baydin2018}, bypassing the need to explicitly form the potentially large Jacobian 
matrix $Df_{\bm \eta}(X)$.

This marginal-free approach enables practical, 
gradient-based optimization of the nonlinear front-end $f_{\bm \eta}$ directly for MI maximization. 
We further extend this framework to handle task-oriented scenarios \cite{Gunduz2023}, 
deriving gradients for maximizing $I(T; Y_t)$, where $T=g(X)$ represents 
a specific task variable derived from the input, as well as for the widely used 
{\em information bottleneck (IB) criterion} \cite{Tishby1999}. 
Our comprehensive experimental validation confirms the validity of 
the proposed gradient formulas against known solutions and demonstrates 
their practical effectiveness in optimizing both linear and nonlinear channel 
models towards their respective objectives.

We have released an implementation, including scripts to reproduce numerical results 
in this paper at: \\ \url{https://github.com/wadayama/information_gradient}.

\section{Preliminaries}
\label{sec:preliminaries}

The SFB methodology~\cite{wadayama2025} provides 
a framework for mutual information estimation in nonlinear Gaussian channels 
via DSM learning. In this section, we review 
the key concepts of the SFB methodology according to \cite{wadayama2025}.

\subsection{Notation}

Unless otherwise stated, all logarithms are natural (nats) in this paper.
Bold letters denote vectors and matrices.
A random variable is denoted by an uppercase letter. 
For a random variable $U$ with density $p_U$, we write $h(U)$ for 
its differential entropy, $I(\cdot;\cdot)$ for mutual information, 
and $\operatorname{Cov}(U)$ for its covariance matrix. 
The Fisher information is defined via the score function 
$s_U(\bm u)\equiv\nabla_{\bm u}\log p_U(\bm u)$ as
\begin{align}
J(U)\equiv \mathbb{E}_{\bm u\sim p_U(\cdot)}\!\left[\|\nabla_{\bm u}\log p_U(\bm u)\|^2\right],
\end{align}
see, e.g.,~\cite{fisher1922,cover2006}.
Other notations used in this paper are summarized in Table \ref{tab:notation}.

\begin{table*}[t]
  \centering
  \footnotesize
  \renewcommand{\arraystretch}{1.2}
  \caption{Notation summary used throughout the paper.}
  \label{tab:notation}
  \begin{tabular}{p{0.17\textwidth} p{0.23\textwidth} p{0.52\textwidth}}
  \hline
  \textbf{Symbol} & \textbf{Type / Domain} & \textbf{Meaning} \\
  \hline
  $X \in \mathbb{R}^n$            & random vector            & Input/source (data) variable. \\
  $W = f_{\eta}(X) \in \mathbb{R}^m$ & random vector          & Encoded/feature variable produced by encoder $f_{\eta}\!:\mathbb{R}^n\!\to\!\mathbb{R}^m$ with parameter $\eta$. \\
  $Z_t \sim \mathcal{N}(0, t \bm I_m)$ & random vector            & Additive Gaussian noise with variance parameter $t \ge 0$. \\
  $Y_t = f_{\bm \eta}(X) + Z_t$       & random vector            & Channel/output variable at noise level $t$. \\
  $s_{Y_t}(\bm y) = \nabla_{\bm y} \log p_{Y_t}(\bm y)$ & vector field    & Marginal score of $Y_t$ (depends on $\bm \eta$ through $p_{Y_t}$). \\
  $s_{\bm \theta}(\bm y;\bm \eta)$             & vector field / NN        & Learned (denoising) score model; optionally conditioned on $\eta$. \\
  $D f_{\bm \eta}(\bm x) \in \mathbb{R}^{m\times n}$ & matrix        & Jacobian of $f_{\bm \eta}$ at $\bm x$; used for VJP/JVP calculus. \\
  $\bm v^{\top} D f_{\bm \eta}(\bm x)$         & row vector               & Vector–Jacobian product (VJP), used to avoid forming full Jacobians. \\
  $\langle \bm u, \bm v\rangle$            & scalar                   & Euclidean inner product.\\
  $J(Y_t)=\mathbb{E}[\|s_{Y_t}(Y_t)\|^2]$ & scalar          & Fisher information of $Y_t$. \\
  $I(U;V)$, $H(U)$                 & scalar                   & Mutual information and (differential) entropy. \\
  $\mathbb{E}[\cdot]$, $\mathrm{Var}[\cdot]$ & operator      & Expectation and variance. \\
  $t \ge 0$                        & scalar                   & Noise/diffusion (variance) parameter. \\
  $\beta \ge 0$                    & scalar                   & IB trade-off parameter in $I(T;Y_t)-\beta I(X;Y_t)$-type objectives. \\
  $\nabla_{\bm x}$                       & operator                 & Gradient w.r.t.\ $\bm x$; similarly $\nabla_{\bm \eta}$ for parameters. \\
  $\operatorname{stop}(\bm v)$         & operator                 & Stop-gradient: treats $\bm v$ as a constant during automatic differentiation. \\
  $L_{\mathrm{DSM}}(\bm \theta)$       & scalar (loss)            & Denoising score matching loss used to train $s_{\bm \theta}$. \\
  $\mathcal{N}(\bm \mu,\bm \Sigma)$        & distribution             & Multivariate normal with mean $\bm \mu$ and covariance $\bm \Sigma$. \\
  \hline
  \end{tabular}
  \end{table*}
  
\subsection{Denoising Score Matching (DSM)}
\label{subsec:dsm}
For the Gaussian-smoothed distribution $p_\sigma \equiv p * \phi_{\sigma^2}$, 
the DSM loss~\cite{vincent2011} is given by
\begin{align}
\mathcal{L}_{\text{DSM}}(\bm \theta) \equiv \E_{\bm{w},\bm{\epsilon}}
\left[\left\|s_{\bm \theta}(\bm{w} + \sigma\bm{\epsilon}) 
+ \frac{\bm{\epsilon}}{\sigma}\right\|^2\right],
\end{align}
where $\bm{w} \sim p$, $\bm{\epsilon} \sim \mathcal{N}(0, \bI)$, 
and $s_{\bm \theta}(\cdot)$ is a parametric score model with the score parameter $\bm \theta$.
The parametric score model is typically implemented as a neural network.
A brief review of DSM learnig suitable for our purpose is given in \cite{wadayama2025}.
Detailed information on DSM learning is given in \cite{vincent2011,song2019,song2021}.

\subsection{Estimation of Fisher Information}
The Fisher information of $\Yt$ is defined by 
$
J(\Yt) \equiv \E_{\Yt}[\|\sYt(Y_t)\|^2],
$
which can be estimated using the learned score model with the Monte Carlo estimator:
\begin{align}
\widehat{J(\Yt)} \equiv \frac{1}{B}\sum_{i=1}^B \|s_{\bm \theta}(\bm{y}_i)\|^2,
\end{align}
where $\bm{y}_i$ is a sample from $\Yt$. Namely, if we have learned score
model $s_{\bm \theta}(\cdot)$, we can estimate the Fisher information of $\Yt$ 
by Monte Carlo estimation.

\subsection{Differential mutual information and Fisher information}

Let $W = f(X)$ be a transformed input.
The de Bruijn identity \cite{stam1959,barron1986entropy}
connects mutual information to Fisher information:
\begin{align}
\frac{d}{dt}I(W; \Yt) = \frac{1}{2}J(\Yt) - \frac{m}{2t}.
\end{align}
By integrating this equation over $t\in[T,\infty)$, 
we have the {\em Fisher integral representation} of MI:
\begin{align}
  I(W;Y_T)=\frac12\int_T^\infty\!\Big(\frac{m}{t}-J(Y_t)\Big)\,dt
\end{align}
that depends only on the Fisher information of $Y_t$.
Furthermore, the following equivalence relation
holds for \emph{any} deterministic function $f$:
\begin{align}
    I(X;Y_t)=I(W;Y_t), \quad Y_t = W + Z_t.
\end{align}
This means that the Fisher integral representation of MI 
can be used to construct an MI estimator:
\begin{align}
    \label{eq:mi-estimation}
     \widehat{I(X;Y_T)} = \frac{1}{2}\int_T^{\infty}\Big(\frac{m}{t}-\widehat J_t\Big) dt,
\end{align}
which is computationally tractable since we only need to estimate 
the Fisher information of $Y_t$ from the learned score function.
This is the key idea of the SFB methodology \cite{wadayama2025}.

\section{Information Gradient}
\label{sec:general}

In this section, we derive the information gradient formula for nonlinear Gaussian channels.
This is the core contribution of this paper.

\subsection{Problem Setup}
\label{subsec:setup}

We focus on the parametric nonlinear channel defined in \eqref{eq:channel_model}. 
To develop our gradient-based optimization framework, we make a few standard assumptions. 
We assume the front-end function $f_{\bm \eta}: \R^n \to \R^m$ is differentiable 
with respect to the parameters $\bm \eta$, which is typically true for functions 
represented by neural networks or other common parametric models. 

We also assume that evaluating the function $f_{\bm \eta}(\bm x)$ for a given input $\bm x$ is 
computationally feasible, allowing for efficient forward simulation of the channel. 
Furthermore, we require that the input distribution $p_X$ is either known analytically 
or can be efficiently sampled, enabling Monte Carlo estimation techniques. 
Finally, throughout our derivations, we assume standard regularity conditions hold, 
permitting the exchange of differentiation and integration operators where necessary. 
These assumptions are generally mild and satisfied in many practical communication 
and signal processing scenarios.

\subsection{Information Gradient Formula}
\label{subsec:gradient}

The information gradient formula is given in the following proposition.
\begin{proposition}[Information gradient]
\label{prop:info_gradient}
Under standard regularity conditions, the gradient of mutual information 
with respect to the front-end parameters $\bm \eta$ is given by
\begin{align}
\nabla_{\bm \eta} I(X; \Yt) = -\E_{X,Z_t}\left[D f_{\bm \eta}(X)^\top \sYt(f_{\bm \eta}(X) + Z_t)\right],
\label{eq:gradient_formula}
\end{align}
where $\sYt(\bm y) \equiv \nabla_y \log p_{\Yt,\eta}(\bm y)$ 
is the score function of the marginal distribution of $\Yt$.
The notation $D f_{\bm \eta}(\bm x) \in \R^{m \times d}$ denotes 
the Jacobian of $f_{\bm \eta}$ with respect to $\bm \eta$.
\end{proposition}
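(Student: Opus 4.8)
\emph{Proof sketch.} The plan is to reduce the information gradient to the gradient of the output differential entropy, and then to differentiate that entropy by treating the two roles of $\bm\eta$ --- in the sampling point and in the marginal density --- separately.

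First I would use $I(X;\Yt)=h(\Yt)-h(\Yt\mid X)$. Conditioned on $X=\bm x$, the output $\Yt$ is $\mathcal N(f_{\bm\eta}(\bm x),\,t\bm I_m)$, whose differential entropy $\tfrac m2\log(2\pi e t)$ depends neither on $\bm x$ nor on $\bm\eta$; hence $h(\Yt\mid X)=\tfrac m2\log(2\pi e t)$ and $\nabla_{\bm\eta}h(\Yt\mid X)=\bm 0$. Therefore $\nabla_{\bm\eta}I(X;\Yt)=\nabla_{\bm\eta}h(\Yt)$, and it suffices to compute the latter.

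Next I would write the output entropy as a channel expectation, $h(\Yt)=-\E_{X,Z_t}\!\big[\log p_{\Yt,\eta}(f_{\bm\eta}(X)+Z_t)\big]$, which exposes both $\bm\eta$-dependencies. Introducing $G(\bm a,\bm b)\equiv\E_{X,Z_t}\!\big[\log p_{\Yt,\bm b}(f_{\bm a}(X)+Z_t)\big]$ so that $h(\Yt)=-G(\bm\eta,\bm\eta)$, the chain rule gives $\nabla_{\bm\eta}h(\Yt)=-\big[\nabla_{\bm a}G+\nabla_{\bm b}G\big]_{\bm a=\bm b=\bm\eta}$. The $\bm a$-derivative propagates through the sampling point: by the chain rule and $\nabla_{\bm y}\log p_{\Yt,\eta}=\sYt$,
\begin{align}
\big[\nabla_{\bm a}G\big]_{\bm a=\bm b=\bm\eta}
=\E_{X,Z_t}\!\big[D f_{\bm\eta}(X)^\top \sYt\big(f_{\bm\eta}(X)+Z_t\big)\big],
\end{align}
which is exactly the claimed expression up to the overall sign. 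The $\bm b$-derivative, evaluated at $\bm b=\bm\eta$, is the expected parameter-score, $\E_{\Yt\sim p_{\Yt,\eta}}\!\big[\nabla_{\bm b}\log p_{\Yt,\bm b}(\Yt)\big|_{\bm b=\bm\eta}\big]=\int\nabla_{\bm b}p_{\Yt,\bm b}(\bm y)\big|_{\bm b=\bm\eta}\,d\bm y=\nabla_{\bm b}\!\int p_{\Yt,\bm b}(\bm y)\,d\bm y\big|_{\bm b=\bm\eta}=\bm 0$. Combining the two pieces yields \eqref{eq:gradient_formula}.

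The main obstacle is not any single computation but justifying the exchanges of differentiation with integration: passing $\nabla_{\bm\eta}$ through the expectation over $(X,Z_t)$ when forming $\nabla_{\bm\eta}p_{\Yt,\eta}$, the interchange underlying the vanishing parameter-score identity, and the disappearance of the spurious ``$+1$'' that arises when differentiating $-p\log p$ (again because $\int\nabla_{\bm\eta}p_{\Yt,\eta}=0$). These are precisely what the stated regularity conditions must guarantee --- integrability and smoothness of $p_{\Yt,\eta}$ in $\bm\eta$, finite moments for $Df_{\bm\eta}(X)$ and $\sYt(\Yt)$, and sufficient tail decay of $p_{\Yt,\eta}\log p_{\Yt,\eta}$. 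As a cross-check of this bookkeeping, one can instead write $p_{\Yt,\eta}(\bm y)=\E_X[\phi_t(\bm y-f_{\bm\eta}(X))]$ with $\phi_t$ the $\mathcal N(0,t\bm I_m)$ density, differentiate the Gaussian kernel using $\nabla_{\bm y}\phi_t(\bm y-f_{\bm\eta}(\bm x))=-\tfrac1t\phi_t(\bm y-f_{\bm\eta}(\bm x))(\bm y-f_{\bm\eta}(\bm x))$, and integrate by parts in $\bm y$; this recovers the same formula.
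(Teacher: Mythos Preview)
Your proof is correct and follows essentially the same route as the paper: reduce $\nabla_{\bm\eta}I(X;\Yt)$ to $\nabla_{\bm\eta}h(\Yt)$ by observing the conditional entropy is the $\bm\eta$-free Gaussian constant, then split the entropy gradient into a sampling-point term and a parameter-score term that vanishes by the score identity. The only cosmetic difference is that the paper routes through the Markov chain $X\to W\to\Yt$ with $W=f_{\bm\eta}(X)$ and conditions on $W$ rather than on $X$, whereas you condition directly on $X$; since $h(\Yt\mid X)=h(\Yt\mid W)=\tfrac m2\log(2\pi e t)$ either way, the two arguments coincide.
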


\begin{proof}
Let $W = f_{\bm \eta}(X)$. Since $\Yt$ depends on $X$ only through $W$, the triple $X \to W \to \Yt$ forms a Markov chain. By the data processing inequality and the chain rule for mutual information:
\begin{align}
I(X; \Yt) = I(W; \Yt) + I(X; \Yt|W) = I(W; \Yt),
\label{eq:markov_chain}
\end{align}
where the second equality follows from the conditional independence $X \perp \Yt \mid W$.

The mutual information can be decomposed as
\begin{align}
I(W; \Yt) = h(\Yt) - h(\Yt|W).
\label{eq:mi_decomp_proof}
\end{align}
Given $W = \bm w$, the conditional distribution is $\Yt|W=\bm w \sim \mathcal{N}(\bm w, t\bI_m)$, 
hence we have
\begin{align}
h(\Yt|W=\bm{w}) = \frac{m}{2}\log(2\pi e t),
\end{align}
which is independent of both $\bm w$ and $\bm \eta$. Therefore:
\begin{align}
\nabla_{\bm \eta} h(\Yt|W) = \bm{0}.
\label{eq:cond_entropy_const}
\end{align}
Combining \eqref{eq:markov_chain}, \eqref{eq:mi_decomp_proof}, and \eqref{eq:cond_entropy_const},
we have
\begin{align}
\nabla_{\bm \eta} I(X; \Yt) = \nabla_{\bm \eta} I(W; \Yt) = \nabla_{\bm \eta} h(\Yt).
\label{eq:grad_mi_equals_grad_h}
\end{align}

To compute $\nabla_{\bm \eta} h(\Yt)$, we use the reparameterization $\Yt = f_{\bm \eta}(X) + Z_t$,
yielding 
\begin{align}
h(\Yt) = -\E_{X,Z_t}\left[\log p_{\Yt,\bm \eta}(f_{\bm \eta}(X) + Z_t)\right].
\end{align}

Under standard regularity conditions (differentiability of $f_{\bm \eta}$, 
exchangeability of differentiation and expectation), applying the chain rule yields
\begin{align}
\nabla_{\bm \eta} h(\Yt)
&= -\E_{X,Z_t}\left[\nabla_{\bm \eta} \log p_{\Yt,\bm \eta}(\bm y)\big|_{\bm y=f_{\bm \eta}(X)+Z_t}\right] \nonumber \\
&\quad - \E_{X,Z_t}\left[D f_{\bm \eta}(X)^\top \nabla_{\bm y} \log p_{\Yt,\bm \eta}(\bm y)\big|_{\bm y=f_{\bm \eta}(X)+Z_t}\right].
\label{eq:two_terms_proof}
\end{align}
The first term vanishes by the score property. Specifically, we have 
\begin{align}
\E_{Y_t}\left[\nabla_{\bm \eta} \log p_{\Yt,\bm \eta}(\Yt)\right]
&= \int \nabla_{\bm \eta} p_{\Yt,\bm \eta}(\bm y)\,d\bm y \\
&= \nabla_{\bm \eta} \int p_{\Yt,\bm \eta}(\bm y)\,d\bm y  \\
&= \nabla_{\bm \eta} 1 = \bm{0}.
\label{eq:score_property}
\end{align}
Substituting \eqref{eq:score_property} into \eqref{eq:two_terms_proof},
we have
\begin{align}
\nabla_{\bm \eta} h(\Yt) = -\E_{X,Z_t}\left[D f_{\bm \eta}(X)^\top \sYt(f_{\bm \eta}(X) + Z_t)\right].
\end{align}
\end{proof}

If we have an estimate of the score function $\sYt(\cdot)$, 
we can use this formula to esitmate the information gradient.
This formula is the key to optimizing the parameters $\bm \eta$ to maximize mutual information.

\begin{remark} \rm
Proposition~\ref{prop:info_gradient}
also yields \emph{analytic or semi-analytic} information gradients
whenever the marginal score $s_{Y_t}(\bm y)=\nabla_{\bm y}\log p_{Y_t}(\bm y)$
is available in closed (or near-closed) form. In particular, for
affine-Gaussian channels $f_{\bm \eta}(\bm x)=\bm A\bm x+ \bm b$ with Gaussian-mixture
inputs, $Y_t$ is a finite Gaussian mixture and the score is given by
\begin{align}
s_{Y_t}(\bm y)=\sum_{k=1}^K\gamma_k(\bm y)\,\big(\bm A\bm \Sigma_k \bm A^\top+t \bm I_m\big)^{-1}\!\big((\bm A\bm \mu_k+\bm b)-\bm y\big),
\end{align}
where $\gamma_k(\bm y)\propto \pi_k\,\mathcal N(\bm y;\bm A\bm \mu_k+\bm b,
\bm A\bm \Sigma_k \bm A^\top+t \bm I_m)$.
Plugging this $s_{Y_t}$ into (\ref{eq:gradient_formula}) gives 
$\nabla_{\bm \eta} I(X; \Yt)$ as a closed-form (or near-closed-form) expression.
The same phenomenon occurs for linear-Gaussian models
(recovering the classical log-det gradients), discrete constellations
(PAM/QAM/PSK) under AWGN, and circular/phase models (e.g., PSK,
wrapped Gaussians/von~Mises), where $s_{Y_t}$ admits Bessel-function
expressions. 
\end{remark}

Such analytical cases where the score function is available in closed form are 
interesting topics to discuss, however, our primary focus of the following sections 
is on the general case where the score function is intractable and must be approximated 
via DSM learning.

\subsection{VJP Identity}
\label{subsec:vjp}

Direct computation of the Jacobian $D f_{\bm \eta}(\bm x) \in \R^{m \times d}$ 
with respect to the parameter $\bm \eta$
can be prohibitively expensive for high-dimensional parameter vectors $\bm \eta$, 
as it requires computing all $m \times d$ partial derivatives explicitly. 
We avoid this computational burden by leveraging 
the \emph{Vector-Jacobian Product (VJP)} identity \cite{Baydin2018}.
This subsection provides a detailed explanation of 
the VJP identity and how to use it to compute the information gradient efficiently.

The next proposition states the VJP identity.
\begin{proposition}[VJP identity]
\label{prop:vjp_identity}
Let $f_{\bm\eta}:\R^n\to\R^m$ be $C^1$ in $\bm\eta$. 
Fix $\bm x\in\R^n$ and $\bm v\in\R^m$. 
Then, we have the VJP identity:
\begin{align}
    \label{eq:vjp_identity}
\nabla_{\bm\eta}\,\langle f_{\bm\eta}(\bm x),\bm v\rangle
= D f_{\bm\eta}(\bm x)^{\!\top}\bm v .
\end{align}
\end{proposition}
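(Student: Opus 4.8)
The plan is to verify the VJP identity by a direct componentwise computation, treating $\langle f_{\bm\eta}(\bm x),\bm v\rangle$ as a scalar-valued function of $\bm\eta$ and differentiating through the finite sum that defines the inner product. First I would write $\langle f_{\bm\eta}(\bm x),\bm v\rangle = \sum_{i=1}^m v_i\, [f_{\bm\eta}(\bm x)]_i$, where $[f_{\bm\eta}(\bm x)]_i$ denotes the $i$-th component of the output. Since $\bm v$ does not depend on $\bm\eta$, I would pull $v_i$ out of the derivative and apply $\nabla_{\bm\eta}$ to each scalar component $[f_{\bm\eta}(\bm x)]_i$, which is well-defined because $f_{\bm\eta}$ is assumed $C^1$ in $\bm\eta$.

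Next I would identify the resulting expression with the matrix-vector product. The $j$-th component of $\nabla_{\bm\eta}\langle f_{\bm\eta}(\bm x),\bm v\rangle$ is $\sum_{i=1}^m v_i\, \partial [f_{\bm\eta}(\bm x)]_i / \partial \eta_j$. By the definition of the Jacobian of $f_{\bm\eta}$ with respect to $\bm\eta$, the $(i,j)$ entry of $D f_{\bm\eta}(\bm x)$ is exactly $\partial [f_{\bm\eta}(\bm x)]_i / \partial \eta_j$, so $\sum_{i=1}^m v_i\, [D f_{\bm\eta}(\bm x)]_{ij} = [D f_{\bm\eta}(\bm x)^\top \bm v]_j$. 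Collecting over all $j$ gives the claimed identity $\nabla_{\bm\eta}\langle f_{\bm\eta}(\bm x),\bm v\rangle = D f_{\bm\eta}(\bm x)^\top \bm v$.

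There is essentially no hard part here: the statement is a restatement of the definition of the transpose Jacobian acting on a vector, and the $C^1$ hypothesis is only needed to guarantee that the partial derivatives exist and that the gradient is the vector of those partials. The one place that warrants a word of care is the bookkeeping of which differentiation is meant — $D f_{\bm\eta}(\bm x)$ is the Jacobian in $\bm\eta$ (an $m\times d$ matrix), not in $\bm x$ — and making sure the index conventions for $\nabla_{\bm\eta}$ (column vector of partials $\partial/\partial\eta_j$) are consistent with the row/column layout of the Jacobian so that the transpose lands in the right place. I would state that convention explicitly at the start of the proof and then the identity follows in one line.
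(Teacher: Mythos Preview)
Your proposal is correct and follows essentially the same approach as the paper: both expand the inner product as $\sum_i v_i [f_{\bm\eta}(\bm x)]_i$, differentiate componentwise in $\eta_j$, and then identify the resulting sum with the $j$-th entry of $D f_{\bm\eta}(\bm x)^\top \bm v$. Your added remark about fixing the Jacobian convention (that $D f_{\bm\eta}(\bm x)$ is the $m\times d$ Jacobian in $\bm\eta$, not in $\bm x$) is a helpful clarification but does not change the argument.
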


\begin{proof}
Write $\langle f_{\bm\eta}(\bm x),\bm v\rangle=\sum_{i=1}^m v_i\,f_{\bm\eta,i}(\bm x)$.
For each $j\in\{1,\dots,d\}$, we have
\begin{align}
\frac{\partial}{\partial \eta_j}\,\langle f_{\bm\eta}(\bm x),\bm v\rangle
= \sum_{i=1}^m v_i\,\frac{\partial}{\partial \eta_j} f_{\bm\eta,i}(\bm x).
\end{align}
By the definition $\big[D f_{\bm\eta}(\bm x)\big]_{ij}
\equiv \frac{\partial}{\partial \eta_j} f_{\bm\eta,i}(\bm x)$, the right-hand side equals
$\big((D f_{\bm\eta}(\bm x))^{\!\top}\bm v\big)_j$.
Collecting the components over $j$ yields
$\nabla_{\bm\eta}\,\langle f_{\bm\eta}(\bm x),\bm v\rangle
= \big(D f_{\bm\eta}(\bm x)\big)^{\!\top}\bm v$, as claimed.
\end{proof}

This identity is the key to efficient computation: instead of forming 
the full Jacobian matrix, we compute the product $D f_{\bm \eta}(\bm x)^\top \bm v$ 
directly as the gradient of a scalar function $\langle f_{\bm \eta}(\bm x), \bm v \rangle$ 
with respect to $\bm \eta$. Modern automatic differentiation frameworks implement 
this efficiently through reverse-mode automatic differentiation (AD) \cite{Baydin2018}.

The VJP identity can be used to estimate the information gradient from a mini-batch of samples.
Consider a mini-batch $\{(\bm x_i, \bm z_i)\}_{i=1}^B$ 
where $\bm x_i \sim p_X$ and $\bm z_i \sim \mathcal{N}(\bm{0}, t\bI_m)$.
Let $\bm y_i = f_{\bm \eta}(\bm x_i) + \bm z_i$ denote the corresponding channel output.
The mini-batch-based {\em information gradient estimator} 
using a learned score function $s_{\bm \theta}(\cdot)$ can be written as
\begin{align}
\widehat{\nabla_{\bm \eta} I}
&= -\frac{1}{B}\sum_{i=1}^B D f_{\bm \eta}(\bm x_i)^\top s_{\bm \theta}(f_{\bm \eta}(\bm x_i) + \bm z_i) \nonumber \\
&= -\nabla_{\bm \eta}\left\{\frac{1}{B}\sum_{i=1}^B \langle f_{\bm \eta}(\bm x_i), {\text{stop}(s_{\bm \theta}(f_{\bm \eta}(\bm x_i) + \bm z_i))} \rangle\right\},
\label{eq:batch_vjp}
\end{align}
where the second equality follows from the VJP identity (\ref{eq:vjp_identity}).
Crucially, when computing this gradient with respect to the front-end parameters $\bm \eta$, 
the score function $s_{\bm \theta}(\cdot)$ 
(and its parameters $\bm \theta$) must be treated as fixed.
The $\text{stop}(\cdot)$ operation indicates this `stop-gradient,' 
ensuring that $s_{\bm \theta}(\bm y_i)$ is treated as a constant vector during 
the reverse-mode automatic differentiation.

We define the VJP loss function:
\begin{align}
\mathcal{L}_{\text{vjp}} \equiv \frac{1}{B}\sum_{i=1}^B \langle f_{\bm \eta}(\bm{x}_i), 
\text{stop}(s_{\bm \theta}(\bm{y}_i)) \rangle.
\label{eq:vjp_loss}
\end{align}
The information gradient estimate is thus given by
$
\widehat{\nabla_{\bm \eta} I} = -\nabla_{\bm \eta} \mathcal{L}_{\text{vjp}}.
$

\subsection{Path-Integral Representation of Mutual Information}

Beyond using gradients for optimization, 
our information gradient formula in Proposition \ref{prop:info_gradient} provides a
\emph{path-integral} route 
to obtain the mutual information.
Let $Y_t=f_{\bm\eta}(X)+Z_t$ with 
a differentiable parameter $\bm\eta$ and denote
\begin{align}
G(\bm\eta) \equiv \nabla_{\bm\eta} I(X;Y_t)
=-\mathbb{E}\!\big[D f_{\bm\eta}(X)^\top s_{Y_t}(Y_t)\big].
\end{align}
For any smooth
path $\gamma:[0,1]\!\to\!\mathbb{R}^d$ in the parameter space with $\gamma(0)=\bm\eta_0$ and
$\gamma(1)=\bm\eta_1$, we have the line integral representation of the mutual information:
\begin{equation}
\label{eq:path-int}
I(X;Y_t;\bm\eta_1)-I(X;Y_t;\bm\eta_0)
=\int_0^1 \big\langle G\big(\gamma(s)\big),\dot\gamma(s)\big\rangle\,ds.
\end{equation}
In practice one chooses $\bm\eta_0$ so that $I(X;Y_t;\bm\eta_0)$ is known, 
and evaluates
\eqref{eq:path-int} with numerical integration. 

A crucial advantage of this path-integral approach is its applicability
to estimate the mutual information of the additive non-Gaussian channel,
namely, $Z_t$ can be non-Gaussian.
This is in contrast to the Fisher integral route that requires de Bruijn identity.
However, the path-integral route needs a non-trivial anchor point $\bm\eta_0$ that 
provides a known mutual information value $I(X;Y_t;\bm\eta_0)$.


\begin{remark} \rm 
The total differential of the mutual information 
with respect to the parameter $\bm \eta$ and the noise level 
$t$ is given by
\begin{align}
dI(\bm\eta,t)
=\big\langle\nabla_\eta I(\bm\eta,t),\,d\eta\big\rangle
+\tfrac12\!\left(J(Y_t;\bm\eta)-\tfrac{m}{t}\right)dt
\label{eq:total_differential}
\end{align}
gives a coordinate-wise ``map'' of the mutual-information 
surface $I(\bm\eta,t)$.
The \(\bm \eta\)-term captures how design changes move us 
on the information landscape,
while the \(t\)-term captures how diffusion (noise) deforms it.
Thus Fisher-integral and \(\bm \eta\)-gradient routes are 
two consistent traversals
of the same information landscape.

From the total differential \eqref{eq:total_differential},
we can obtain two complementary integral representations.
\begin{itemize}
\item[(i)] {$t$-direction (Fisher integral route):}
Assuming  the regularity 
to differentiate under the integral sign and 
$\lim_{T\to\infty} I(\eta,T)=0$ , we have 
the Fisher integral representation of MI and its gradient:
\begin{align}
I(\eta,t_0)
&=\frac12\int_{t_0}^{\infty}\!\Big(J(Y_\tau;\eta)-\frac{m}{\tau}\Big)\,d\tau, \\
\nabla_\eta I(\eta,t_0)
&=\frac12\int_{t_0}^{\infty}\!\nabla_\eta J(Y_\tau;\eta)\,d\tau.
\end{align}

\item[(ii)] {Fixed-$t$ path (parameter route):}
For any smooth path $\eta(s)$ joining $\eta_0$ and $\eta_1$ at fixed $t=t_0$,
\begin{align}
I(\eta_1,t_0)-I(\eta_0,t_0)
=\int_{0}^{1}\!\big\langle \nabla_{\bm \eta} 
I(\bm \eta(s),t_0),\,\dot {\bm \eta}(s)\big\rangle\,ds.
\end{align}
\end{itemize}
This recovers the path-integral representation \eqref{eq:path-int} 
(taking, e.g., a straight-line path in $\bm \eta$).

\end{remark}

\section{Mutual Information Maximization} 
\label{sec:maximization}

Having established a tractable formula for the information gradient 
in Section~\ref{sec:general}, 
we now detail how it can be employed to optimize the channel parameters $\bm \eta$ 
for maximizing mutual information.

\subsection{Gradient-based Optimization Framework} 
\label{subsec:gradient_ascent}

We aim to solve the optimization problem posed in \eqref{eq:optimization_problem}. 
Although various types of constraints on the front-end parameters $\bm\eta$ can be considered, 
we focus on the following regularized optimization problem for simplicity and clarity\footnote{We will discuss a projected gradient ascent method in Subsection \ref{sec:mi_design} for a constrained optimization problem.}.
To incorporate potential constraints or preferences on the parameters $\bm \eta$, 
we consider the regularized maximization problem: 
\begin{align} 
\text{maximize }_{\bm\eta \in \R^d}\ \mathcal{J}(\bm\eta) \equiv I(X;Y_t) - \lambda C(\bm\eta), 
\label{eq:max_objective} 
\end{align} 
where $C(\cdot)$ is a differentiable regularizer (or penalty function) 
encouraging desirable properties for $\bm \eta$ 
(e.g., sparsity, power constraints encoded softly), 
and $\lambda \geq 0$ controls the regularization strength. 

Using the information gradient formula \eqref{eq:gradient_formula} 
derived in Proposition~\ref{prop:info_gradient}, 
the gradient of the objective function $\mathcal{J}(\bm \eta)$ is given by
\begin{align} 
\nabla_{\bm\eta} \mathcal{J}(\bm\eta) &= \nabla_{\bm\eta} I(X;Y_t) 
- \lambda \nabla_{\bm\eta} C(\bm\eta) \\
&= -\E\!\left[ D f_{\bm\eta}(X)^{\!\top} s_{Y_t}\!\big(\Yt\big)\right]
- \lambda \nabla_{\bm\eta} C(\bm\eta).
\label{eq:objective_gradient} 
\end{align} 
We can then perform stochastic gradient ascent. 
Using a mini-batch estimate $\widehat{\E}[\cdot]$ for the expectation term, 
computed efficiently via the VJP identity 
as shown in \eqref{eq:batch_vjp} with a learned score model $s_{\bm \theta}(\cdot)$.
The update rule of the parameters $\bm \eta$ is given by
\begin{align}
    \label{eq:param_update}
    \bm\eta \leftarrow \bm\eta - \alpha
    \Big\{ \widehat{\E}\!\left[ D f_{\bm\eta}(X)^{\!\top} s_{\bm \theta}(\Yt) \right]
           + \lambda \nabla_{\bm\eta} C(\bm\eta)\Big\},
\end{align}
where $\widehat{\E}[\cdot]$ denotes a minibatch estimate with the VJP implementation,
and $\alpha>0$ is the learning rate.

A crucial point arises here: the true marginal score $s_{Y_t}(\cdot)$ 
implicitly depends on the parameters $\bm \eta$ because the marginal distribution $p_{Y_t}(\cdot)$ 
changes as $\bm \eta$ is updated. Therefore, the score model $s_{\bm \theta}(\cdot)$ 
used in the gradient estimate \eqref{eq:param_update} should ideally reflect 
the score corresponding to the current parameter value $\bm \eta$. 
This necessitates periodically updating or re-training the score model $s_{\bm \theta}(\cdot)$ 
as $\bm \eta$ evolves. The overall procedure thus becomes an alternating optimization, 
detailed next.

\subsection{Alternating Optimization}
\label{subsec:alternating_optimization}

The details of the alternating optimization are given in Algorithm~\ref{alg:alternating_opt}.
The optimization procedure alternates between updating the score model $s_{\bm \theta}$ 
and the front-end parameters $\bm \eta$ to maximize the regularized objective 
$\mathcal{J}(\bm\eta) = I(X;Y_t) - \lambda C(\bm\eta)$ defined in \eqref{eq:max_objective}. 

\begin{algorithm}
\caption{Alternating Optimization for MI Maximization}
\label{alg:alternating_opt}
\begin{algorithmic}[1]
\Require Initial parameters $\bm \eta^{(0)}$, score model parameters $\bm \theta^{(0)}$, learning rates $\alpha_\theta, \alpha_\eta$, number of outer iterations $K$, number of inner score-learning steps $S$.
\Ensure Optimized parameters $\bm \eta^{(K)}$.
\For{$k = 0$ to $K-1$}

\State {\textbf{Phase 1: $\bm \theta$-Update} (fix $\bm \eta^{(k)}$, update $\bm \theta$)}

\State Initialize or load score model parameters $\bm \theta \leftarrow \bm \theta^{(k)}$
    \For{$s = 1$ to $S$}
        \State Sample a mini-batch $\{\bm x_i\}_{i=1}^B$ and $\{\bm \varepsilon_i\}_{i=1}^B$,\\
        \hspace{1cm}$\bm x_i \sim p_X, \bm \varepsilon_i \sim \mathcal{N}(0, \bm I_m)$.
        \State Compute $\bm w_i = f_{\bm \eta^{(k)}}(\bm x_i)$ for $i=1,\dots,B$.
        \State Compute the DSM loss using the current $\bm \theta$:
            \[ \mathcal{L}_{\text{dsm}}(\bm \theta) \equiv \frac{1}{B} \sum_{i=1}^B \left\|s_{\bm \theta}(\bm w_i + \sqrt{t}\,\bm \varepsilon_i) + \frac{\bm \varepsilon_i}{\sqrt{t}}\right\|^2. \]
        \State Update score model parameters:
            \[ \bm \theta \leftarrow \bm \theta - \alpha_\theta \nabla_{\bm \theta} \mathcal{L}_{\text{dsm}}(\bm \theta). \]
    \EndFor
    \State Store updated score parameters $\bm \theta^{(k+1)} \leftarrow \bm \theta$.

    \State {\textbf{Phase 2: $\bm \eta$-Update} (fix $\bm \theta^{(k+1)}$, update $\bm \eta^{(k)}$)}
    \State Sample a mini-batch $\{\bm x_i\}_{i=1}^B$ and $\{\bm z_i\}_{i=1}^B$,\\
    \hspace{0.5cm}$\bm x_i \sim p_X, \bm z_i \sim \mathcal{N}(0, t \bm I_m)$.
    \State Compute $\bm y_i = f_{\bm \eta^{(k)}}(\bm x_i) + \bm z_i$ for $i=1,\dots,B$.
    \State Evaluate the fixed score function: $\bm s_i = s_{\bm \theta^{(k+1)}}(\bm y_i)$.
    \State Define the VJP loss using the current $\bm \eta^{(k)}$:
        \[ \mathcal{L}_{\text{vjp}}(\bm \eta^{(k)}) = \frac{1}{B}\sum_{i=1}^B \langle f_{\bm \eta^{(k)}}(\bm{x}_i), \text{stop}(\bm s_i) \rangle. \]
    \State Estimate the gradient of the objective: 
        \[ \widehat{\nabla_{\bm\eta} \mathcal{J}(\bm \eta^{(k)})} = - \nabla_{\bm \eta} \mathcal{L}_{\text{vjp}}(\bm \eta^{(k)}) - \lambda \nabla_{\bm \eta} C(\bm \eta^{(k)}). \]
    \State Update front-end parameters using gradient ascent:
        \[ \bm \eta^{(k+1)} \leftarrow \bm \eta^{(k)} + \alpha_\eta \widehat{\nabla_{\bm\eta} \mathcal{J}(\bm \eta^{(k)}).} \]
\EndFor
\end{algorithmic}
\end{algorithm}

In general, the objective function $I(X;Y_t)$ is non-concave with respect to 
$\bm \eta$. Therefore, this alternating procedure typically converges to a stationary point 
of the regularized objective $\mathcal{J}(\bm\eta)$. Under specific conditions, 
such as problem-specific convexity (which is rare but can occur in simplified linear-Gaussian scenarios) 
and appropriate step-size control, convergence to a global optimum might be possible, 
but local optimality is more generally expected.

\subsection{Computational Advantages}
\label{subsec:computational_advantages}
The VJP-based implementation of the information gradient offers significant 
computational advantages, particularly for models with high-dimensional parameter 
spaces or output spaces. 
A key benefit is that it {\em avoids the explicit computation and storage of 
the potentially massive $m \times d$ Jacobian matrix} 
$D f_{\bm \eta}(\bm x)$. 
Instead, the required vector-Jacobian product is computed directly, 
typically requiring only about the same computational effort as a single 
reverse-mode automatic differentiation pass 
through the function $f_{\bm \eta}(\bm x)$, 
a substantial saving compared to the $O(md)$ cost of forming the full Jacobian. 
Furthermore, the core operation within the VJP loss, 
the inner product $\langle f_{\bm \eta}(\bm x_i), s_{\bm \theta}(\bm y_i) \rangle$, 
is {highly amenable to vectorization}, allowing efficient computation across 
large mini-batches on parallel hardware such as GPUs.

Despite these advantages making the gradient \emph{estimation} step efficient, 
it is important to consider the overall computational cost of the alternating optimization 
procedure in Algorithm~\ref{alg:alternating_opt}. 
The score learning phase, which requires retraining or fine-tuning the score model 
$s_{\bm \theta}$ periodically as $\bm \eta$ updates, 
can still be computationally intensive. In some practical scenarios, 
this score learning step may become the primary bottleneck 
for the optimization process.

One remedy for avoiding the expensive retraining in Phase~1 
is to employ a {\em conditional score network}. Such a network is designed 
to approximate the score function conditioned on the parameters, 
taking both the noisy output $\bm y$ and the parameters $\bm \eta$ as input, 
i.e., $s_{\bm \theta}(\bm y, \bm \eta)$. 
This network can be trained offline (or initially) 
to generalize across a range of relevant parameter values by minimizing 
an objective:
\begin{align}
\E_{\bm x, \bm \varepsilon, \bm \eta}\left[\left\|s_{\bm \theta}
(f_{\bm \eta}(\bm x)+\sqrt{t} \bm \varepsilon, \bm \eta)
+ \frac{\bm \varepsilon}{\sqrt{t}}\right\|^2\right],
\label{eq:conditional_dsm}
\end{align}
where $\bm \varepsilon \sim \mathcal{N}(\bm{0}, \bm I_m)$ and the 
expectation includes averaging over parameters $\bm \eta$ 
drawn from an appropriate prior or training distribution. 
If this conditional network $s_{\bm \theta}(\bm y, \bm \eta)$ is sufficiently trained, 
it can serve as a proxy for the true score function $s_{Y_t}(\bm y)$ corresponding 
to any given $\bm \eta$ within the trained range. 
Consequently, the explicit {Phase~1 (Score Learning) of 
the alternating optimization procedure can potentially be skipped} 
at each iteration. Instead, one can directly use the pre-trained conditional score network 
$s_{\bm \theta}(\bm y, \bm \eta^{(k)})$, 
evaluated with the current parameters $\bm \eta^{(k)}$, 
within Phase~2 to estimate the gradient $\widehat{\nabla_{\bm\eta} \mathcal{J}(\bm \eta^{(k)})}$ 
and update $\bm \eta$.

\section{Task-Oriented Extension}
\label{sec:task}

In many practical scenarios, perfect reconstruction of the input signal $X$ at the receiver 
is unnecessary or even undesirable. Instead, the primary goal is often to preserve information 
relevant to a specific task variable, denoted by $T = g(X)$. 
Here, $g: \R^n \to \mathcal{T}$ represents a function that extracts task-relevant features 
from the input $X$, mapping it to a task space $\mathcal{T}$. 
Examples of such tasks abound in modern applications, 
including image classification where $T$ represents discrete class 
labels (e.g., $T \in \{1,\ldots,K\}$), anomaly detection where $T$ might be 
a binary indicator ($T \in \{0,1\}$), 
or parameter estimation (regression) where $T$ consists 
of continuous target values ($T \in \R^k$). 

It is therefore natural to shift the optimization objective 
from maximizing $I(X;Y_t)$ to maximizing the task-relevant 
information $I(T;Y_t)$. This formulation aligns directly with 
emerging concepts in semantic and goal-oriented communication \cite{Gunduz2023}. 
In this section, we extend our information gradient framework 
to address these task-oriented optimization problems.

\subsection{Task-Oriented SFB: Fisher Integral Representation of Task-Oriented Mutual Information}
\label{subsec:sfb-task-fisher-int}

To apply the principles underlying the SFB framework to the task-oriented 
objective $I(T;Y_t)$, we first require a method to connect this mutual information 
to score-based quantities. 
Analogous to how $I(X;Y_t)$ relates to the marginal Fisher information via the de Bruijn identity, 
the task-oriented MI involves both the marginal Fisher information $J(Y_t)$ 
and the Fisher information conditioned on the task variable, $\mathbb{E}_T[J(Y_t|T)]$. 
This subsection derives the resulting Fisher integral representation for $I(T;Y_t)$, 
establishing the necessary foundation for both estimation and optimization within the SFB context.

Let $T=g(X)$ (discrete or continuous) and 
$Y_t=f_{\bm\eta}(X)+Z_t$ with $Z_t\sim\mathcal N(\bm 0, t \bm I_m)$ for $t>0$.
Define the score functions:
\begin{align}
s_{Y_t}(\bm y) &\equiv \nabla_{\bm y}\log p_{Y_t}(\bm y), \\
s_{Y_t| T}(\bm y| \tau ) &\equiv \nabla_{\bm y}\log p_{Y_t| T}(\bm y |\tau),
\end{align}
where the variable $\tau$ represents a realization of the task variable $T$.
The corresponding Fisher informations are given by
\begin{align}
J(Y_t) &\equiv \E_{Y_t}\!\big[\|s_{Y_t}(Y_t)\|^2\big], \\
J(Y_t |T) &\equiv \E_{Y_t |T}\!\big[\|s_{Y_t | T}(Y_t| T)\|^2\,\big|\,T\big].
\end{align}
By the unconditional and conditional de~Bruijn identities:
\begin{align}
\frac{d}{dt} h(Y_t) &=\frac12 J(Y_t), \\
\frac{d}{dt} h(Y_t| T) &=\frac12\,\E_T\!\big[J(Y_t| T)\big],
\end{align}
we obtain the task-oriented derivative
\begin{align}
\frac{d}{dt}I(T;Y_t)
= \frac12\Big(J(Y_t)-\E_T\!\big[J(Y_t | T)\big]\Big).
\label{eq:sfb-task-diff}
\end{align}
Integrating over $[t^*,U]$ yields
\begin{align}
I(T;Y_{t^*})
= \frac12 \int_{t^*}^{U}\!\Big(J(Y_t) - \E_T[J(Y_t\mid T)]\Big)\,\mathrm dt
\;+\; I(T;Y_U).
\label{eq:sfb-task-int-U}
\end{align}
In the strong-noise limit $U\to\infty$ 
we can obtain the Fisher integral representation:
\begin{align}
I(T;Y_{t^*})
= \frac12 \int_{t^*}^{\infty}\!\Big(J(Y_t) - \E_T[J(Y_t\mid T)]\Big)\,\mathrm dt
\label{eq:sfb-task-int}
\end{align}
under the regularity condition that the expectation is finite.

If we have score models for $s_{Y_t}(\bm y)$ and $s_{Y_t|T}(\bm y|\tau)$, 
we can estimate the task-oriented mutual information by Monte Carlo estimators of 
the Fisher and conditional Fisher information
according to the standard SFB recipe.


\begin{remark}[Reduction to the standard SFB when $T=X$] \rm
    \label{rem:sfb-reduction}
    Setting $T=X$ in \eqref{eq:sfb-task-diff} yields the classical SFB for $I(X;Y_t)$.
    Indeed, conditional PDF on $X=\bm x$ is given by
    \begin{align}
    p_{Y_t| X}(\bm y|\bm x) = \mathcal N\!\big(f_{\bm\eta}(\bm x),\, t \bm I_m\big),
    \end{align}
    so the conditional score is given by
    \begin{align}
    s_{Y_t\mid X}(\bm y|\bm x)=\nabla_{\bm y}\log p_{Y_t|X}(\bm y|\bm x)
    = -\frac{1}{t}\big(\bm y - f_{\bm\eta}(\bm x)\big),
    \end{align}
    and therefore
    \begin{align}
    J(Y_t|X)
    &=\E\!\big[\|s_{Y_t\mid X}(Y_t\mid X)\|^2\,\big|\,X\big] \\
    &=\frac{1}{t^2}\,\E\!\big[\|Y_t-f_{\bm\eta}(X)\|^2 \,\big|\, X\big] \\
    &=\frac{m}{t}.
    \end{align}
    Taking expectation over $X$ gives $\E[J(Y_t| X)]=m/t$, and
    the equation \eqref{eq:sfb-task-diff} reduces to
    \begin{align}
    \frac{d}{dt}\,I(X;Y_t)
    =\frac12\Big(J(Y_t)-\frac{m}{t}\Big).
    \end{align}
    Integrating (with the same regularity) yields the standard Fisher integral representation:
    \begin{align}
    I(X;Y_{t^*})
    =\frac12\int_{t^*}^{\infty}\!\Big(J(Y_t)-\frac{m}{t}\Big)\,dt,
    \end{align}
    where $m=\dim(Y)$.
\end{remark}
   
\begin{remark}[Stochastic tasks $T\!\sim\! p(T|X)$] \rm 
    \label{rem:stochastic_T}
    While this paper focuses on deterministic tasks $T=g(X)$, all SFB and gradient 
    formulas extend verbatim to the stochastic case $T\!\sim\!p_{T|X}$ provided that 
    (i) forward sampling from $p_{T|X}$ is efficient and 
    (ii) standard regularity conditions for exchanging differentiation and expectation hold. 
    In particular, the conditional de~Bruijn identity remains valid:
    \begin{align}
    \frac{d}{dt}I(T;Y_t) = \frac12\Big(J(Y_t)-\E_{T}\big[J(Y_t| T)\big]\Big),
    \end{align}
    hence the Fisher integral representation and its Monte Carlo estimation 
    carry over with expectations taken jointly over $(X,T)$.
\end{remark}

\subsection{Task-Oriented Information Gradient}
\label{subsec:task_gradient}

The information gradient corresponding to $I(T;Y_t)$ can be used 
for optimizing the parameter $\bm \eta$ to maximize 
the task-oriented mutual information.

\begin{proposition}[Task-oriented information gradient]
\label{prop:task_gradient}
Under standard regularity conditions, the gradient of 
the task-oriented mutual information is given by
\begin{align}
\nabla_{\bm \eta} I(T; \Yt) = \E_{X,Z_t}\left[D f_{\bm \eta}(X)^\top \left(\sYtT(\Yt|T) 
- \sYt(\Yt)\right)\right],
\label{eq:task_gradient}
\end{align}
where $\Yt = f_{\bm \eta}(X) + Z_t$.
\end{proposition}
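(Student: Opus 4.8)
The plan is to mirror the proof of Proposition~\ref{prop:info_gradient}, exploiting the decomposition $I(T;\Yt)=h(\Yt)-h(\Yt\mid T)$. The gradient of the first term is already available: from \eqref{eq:grad_mi_equals_grad_h} and the lines following it, $\nabla_{\bm\eta}h(\Yt)=-\E_{X,Z_t}[D f_{\bm\eta}(X)^\top \sYt(\Yt)]$. So the only genuinely new task is to establish
\[
\nabla_{\bm\eta} h(\Yt\mid T) = -\E_{X,Z_t}\!\left[D f_{\bm\eta}(X)^\top \sYtT(\Yt\mid T)\right],
\]
after which subtracting the two gradients yields \eqref{eq:task_gradient} directly (note the sign flip produces $\sYtT-\sYt$).

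To obtain $\nabla_{\bm\eta}h(\Yt\mid T)$, I would first observe that $T=g(X)$ does not depend on $\bm\eta$, so the conditional law $p_{X\mid T=\tau}$ is $\bm\eta$-free; conditioned on $T=\tau$ we still have the reparameterization $\Yt=f_{\bm\eta}(X)+Z_t$ with $X\sim p_{X\mid T=\tau}$ and $Z_t\sim\mathcal N(\bm 0,t\bI_m)$, so that $\Yt\mid T=\tau$ has density $p_{\Yt\mid T,\bm\eta}(\cdot\mid\tau)$. Using $h(\Yt\mid T)=\E_T[h(\Yt\mid T=\tau)]$ together with the tower property $\E_T\E_{X\mid T}=\E_X$, write
\[
h(\Yt\mid T) = -\E_{X,Z_t}\!\left[\log p_{\Yt\mid T,\bm\eta}\big(f_{\bm\eta}(X)+Z_t \mid g(X)\big)\right].
\]
Differentiating under the expectation (permitted by the assumed regularity) and applying the chain rule produces exactly the two-term structure of \eqref{eq:two_terms_proof}: an ``explicit-$\bm\eta$'' term $-\E_{X,Z_t}[\nabla_{\bm\eta}\log p_{\Yt\mid T,\bm\eta}(\bm y\mid\tau)]$ and a ``chain-rule'' term $-\E_{X,Z_t}[D f_{\bm\eta}(X)^\top\nabla_{\bm y}\log p_{\Yt\mid T,\bm\eta}(\bm y\mid\tau)]$, both evaluated at $(\bm y,\tau)=(f_{\bm\eta}(X)+Z_t,\,g(X))$. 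The second term is precisely $-\E_{X,Z_t}[D f_{\bm\eta}(X)^\top \sYtT(\Yt\mid T)]$. The explicit-$\bm\eta$ term vanishes by the conditional analogue of the score identity \eqref{eq:score_property}: for each fixed $\tau$, $\E_{X\mid T=\tau,Z_t}[\nabla_{\bm\eta}\log p_{\Yt\mid T,\bm\eta}(\Yt\mid\tau)]=\nabla_{\bm\eta}\!\int p_{\Yt\mid T,\bm\eta}(\bm y\mid\tau)\,d\bm y=\nabla_{\bm\eta}1=\bm 0$, and averaging over $T$ preserves this. Collecting the surviving term gives the displayed formula for $\nabla_{\bm\eta}h(\Yt\mid T)$, hence \eqref{eq:task_gradient}.

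The step I expect to be the main obstacle — or at least the point requiring care — is the conditioning on a \emph{deterministic} task variable $T=g(X)$. When $T$ is continuous, the conditional measure $p_{X\mid T=\tau}$ is supported on the level set $g^{-1}(\tau)$, so one must invoke the disintegration/coarea regularity needed for $p_{\Yt\mid T,\bm\eta}(\cdot\mid\tau)$ to be a bona fide density, for the tower identity to hold in the stated form, and for the interchange of $\nabla_{\bm\eta}$ with both the inner $\bm y$-integral (the score identity) and the outer expectation. For discrete $T$ this is immediate; in the general case it falls under the ``standard regularity conditions'' invoked in the statement, and, as noted in Remark~\ref{rem:stochastic_T}, the entire argument carries over verbatim to stochastic tasks $T\sim p_{T\mid X}$ with all expectations taken jointly over $(X,T)$.
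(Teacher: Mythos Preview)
Your proposal is correct and follows essentially the same route as the paper: both decompose $I(T;\Yt)=h(\Yt)-h(\Yt\mid T)$, reparameterize via $\Yt=f_{\bm\eta}(X)+Z_t$, differentiate under the expectation, and eliminate the explicit-$\bm\eta$ terms using the (conditional) score identity. Your treatment is slightly more explicit about the tower property and the regularity needed for conditioning on a deterministic $T$, but the argument is the same.
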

Note that we assumed $T=g(X)$;
this means that the expectation over $T$ is implicit since $T$ is determined by $X$.

\begin{proof}
The mutual information can be decomposed as:
\begin{align}
I(T; \Yt) 
&= h(\Yt) - h(\Yt|T)  \\
&= -\E_{\Yt}[\log p_{\Yt,\bm \eta}(\Yt)] + \E_{T,\Yt}[\log p_{\Yt|T,\bm \eta}(\Yt|T)].
\end{align}
Differentiating with respect to $\bm \eta$ 
and using the reparameterization $\Yt = f_{\bm \eta}(X) + Z_t$, we have
\begin{align}
\nabla_{\bm \eta} I(T; \Yt)
&= -\nabla_{\bm \eta} \E_{X,Z_t}[\log p_{\Yt,\bm \eta}(f_{\bm \eta}(X) + Z_t)] \nonumber \\
&\quad + \nabla_{\bm \eta} \E_{X,Z_t}[\log p_{\Yt|T,\bm \eta}(f_{\bm \eta}(X) + Z_t | T)].
\end{align}
Applying the chain rule to each term and using the score property 
$\E[\nabla_{\bm \eta} \log p] = 0$ to eliminate the parameter-dependent terms 
(as in the proof of Proposition~\ref{prop:info_gradient}), 
we obtain
\begin{align} \nonumber
&\nabla_{\bm \eta} I(T; \Yt) \\
&= \E_{X,Z_t}\left[D f_{\bm \eta}(X)^\top \nabla_{\bm y} \log p_{\Yt|T,\bm \eta}(\bm y|T)\big|_{\bm y=f_{\bm \eta}(X)+Z_t}\right] \nonumber \\
&\quad - \E_{X,Z_t}\left[D f_{\bm \eta}(X)^\top \nabla_{\bm y} \log p_{\Yt,\bm \eta}(\bm y)\big|_{\bm y=f_{\bm \eta}(X)+Z_t}\right],
\end{align}
which yields \eqref{eq:task_gradient}. 
\end{proof}

It is important to note that the conditional score expectation vanishes when $T = X$, recovering~\eqref{eq:gradient_formula}.
Moving $\bm \eta$ in the direction of $D f_{\bm \eta}^\top (\sYtT - \sYt)$ 
increases $I(T; \Yt)$. This principle is the key to the {task-oriented optimization}.

\begin{remark}[Stochastic tasks $T\!\sim\! p(T|X)$] \rm 
In case of stochastic tasks $T\!\sim\! p(T|X)$, the information gradient becomes
\begin{align}
\nabla_{\bm\eta} I(T;Y_t)
=
\E_{X,Z_t,T}\!\left[
D f_{\bm\eta}(X)^{\!\top}
\big\{ s_{Y_t|T}(Y_t|T)-s_{Y_t}(Y_t) \big\}
\right],
\end{align}
i.e., the random variable $T$ is included in the expectation.
Practically, one trains the unconditional score $s_{Y_t}$ and the conditional score $s_{Y_t|T}$ using paired samples $(X,T)$ obtained by forward sampling $T\!\sim\!p(T|X)$, 
then uses the same VJP estimator as in the deterministic case.
\end{remark}

\subsection{Implementation via Alternating Optimization}
\label{subsec:task_implementation}

Optimizing the task-oriented objective using the gradient \eqref{eq:task_gradient} 
follows the same alternating optimization structure outlined 
in Algorithm~\ref{alg:alternating_opt}. 
The key difference lies in the score learning phase (Phase 1). 
Instead of learning only the marginal score $s_{Y_t}$, 
we now need approximations for both the marginal score
 $s_{Y_t}(\bm y) \approx s_{\bm \phi}(\bm y)$ and 
 the conditional score $s_{Y_t|T}(\bm y|\tau) \approx s_{\bm \psi}(\bm y, \tau)$.

These are learned using appropriate DSM objectives. 
The marginal score $s_{\bm \phi}(\bm y)$ is trained using the standard DSM loss:
\begin{align}
\E_{\bm{x}, \bm{\varepsilon}}\left[\left\|s_{\bm \phi}(\bm{w} + \sqrt{t}\,\bm{\varepsilon}) + \frac{\bm{\varepsilon}}{\sqrt{t}}\right\|^2\right], \quad \bm w = f_{\bm \eta}(\bm x).
\label{eq:unconditional_dsm}
\end{align}
The conditional score $s_{\bm \psi}(\bm y, \tau)$ is trained using a conditional 
DSM loss, where only $\bm y$ is perturbed while 
the conditioning variable $\tau=g(\bm x)$ remains clean:
\begin{align}
\E_{\bm{x}, \bm{\varepsilon}}\left[\left\|s_{\bm \psi}(\bm{w} + \sqrt{t}\,\bm{\varepsilon}, \tau) + \frac{\bm{\varepsilon}}{\sqrt{t}}\right\|^2\right], \quad \bm w = f_{\bm \eta}(\bm x), \tau = g(\bm x).
\label{eq:conditional_dsm_task}
\end{align}
In Phase 2 (Parameter Update), these two learned score functions, 
$s_{\bm \phi}$ and $s_{\bm \psi}$, are then used within the VJP framework 
to estimate the task-oriented gradient \eqref{eq:task_gradient} for updating $\bm \eta$.

Given learned score functions $s_{\bm \phi}$ and $s_{\bm \psi}$, we update $\bm \eta$ using the VJP formulation. For a mini-batch $\{(\bm x_i, \bm z_i)\}_{i=1}^B$ with $\tau_i = g(\bm x_i)$, compute $\bm w_i = f_{\bm \eta}(\bm x_i)$ and $\bm y_i = \bm w_i + \bm z_i$. Define the VJP loss as in \eqref{eq:vjp_loss}:
\begin{align}
\mathcal{L}_{\text{vjp}}^{\text{task}} \equiv \frac{1}{B}\sum_{i=1}^B \left\langle f_{\bm \eta}(\bm{x}_i), \text{stop}\left(s_{\bm \psi}(\bm{y}_i, \tau_i) - s_{\bm \phi}(\bm{y}_i)\right) \right\rangle.
\label{eq:task_vjp_loss}
\end{align}
The gradient estimator is thus given by
\begin{align}
\widehat{\nabla_{\bm \eta} I(T; \Yt)} &\equiv \nabla_{\bm \eta} \mathcal{L}_{\text{vjp}}^{\text{task}}.
\end{align}

\subsection{Incorporating Utility Functions}
\label{subsubsec:utility}

In certain applications, maximizing the raw mutual information $I(T;Y_t)$ 
might not directly align with the ultimate system goal. 
For instance, the perceived benefit or `utility' might exhibit diminishing 
returns as MI increases, 
or perhaps the goal is related to a downstream task performance metric 
that is a nonlinear function of the MI. 
To accommodate such scenarios, we can introduce a differentiable utility 
function $U: \R \to \R$, where $U(I)$ represents 
the utility derived from achieving a mutual information level $I$. 
The objective then becomes maximizing this utility, $U(I(T;Y_t))$.

Thanks to the chain rule, computing the gradient for this new objective 
is straightforward. The gradient with respect to the parameters $\bm \eta$ 
is simply:
\begin{align}
\nabla_{\bm \eta} U(I(T;Y_t)) = U'(I(T;Y_t)) \cdot \nabla_{\bm \eta} I(T;Y_t),
\label{eq:utility_gradient}
\end{align}
where $U'(I)$ is the derivative of the utility function evaluated 
at the current MI level $I=I(T;Y_t)$. 
In practice, we can estimate this gradient by multiplying our 
previously derived estimator 
for $\nabla_{\bm \eta} I(T;Y_t)$ (e.g., $\nabla_{\bm \eta} 
\mathcal{L}_{\text{vjp}}^{\text{task}}$ 
from \eqref{eq:task_vjp_loss}) by 
the scalar value $U'(I)$. 
Note that estimating the current MI value $I(T;Y_t)$ 
might be needed to evaluate $U'(I)$, 
potentially using methods like the SFB integral 
representation \eqref{eq:sfb-task-int} or path integral \eqref{eq:path-int} 
if $U'$ is not constant. 
This allows the optimization framework to be adapted to 
a wider range of application-specific objectives beyond direct MI maximization.

\section{Information Bottleneck Extension}
\label{sec:ib}

In practical communication systems, bandwidth is a critical and often scarce resource. 
While the task-oriented objective of maximizing $I(T; Y_t)$ ensures 
high performance related to the task variable $T=g(X)$, 
it might inadvertently lead to representations $Y_t$ 
that retain excessive information about the original input $X$, 
potentially consuming unnecessary bandwidth. 
The information bottleneck (IB) principle \cite{Tishby1999} 
offers an elegant framework to navigate this trade-off 
by explicitly balancing two competing goals: maximizing relevance, 
measured by $I(T; Y_t)$, while simultaneously promoting compression 
by minimizing $I(X; Y_t)$. This encourages the system to learn 
representations that are maximally informative about the task $T$ 
while being minimally informative about the rest of the input $X$. 
In this section, we extend the proposed information gradient framework 
to optimize systems according to the IB criterion.

The IB objective \cite{Tishby1999} is defined as:
\begin{equation}
\mathcal{L}_{\text{IB}}(\bm \eta) = I(T; Y_t) - \beta I(X; Y_t),
\label{eq:ib_objective}
\end{equation}
where $\beta \geq 0$ is a Lagrange multiplier controlling 
the trade-off between compression and relevance. 

Combining the information gradient formulas from 
Proposition~\ref{prop:info_gradient} and 
Proposition~\ref{prop:task_gradient}, 
we can derive the gradient of the IB objective.
\begin{proposition}[IB gradient]
\label{prop:ib_gradient}
Under standard regularity conditions, the gradient 
of the IB objective~\eqref{eq:ib_objective} with respect to channel parameters $\bm \eta$ is given by
\begin{equation}
\begin{split}
\nabla_{\bm \eta} \mathcal{L}_{\text{IB}}(\bm \eta) = \mathbb{E}_{X,Z_t}\Big[ Df_\eta(X)^\top \big( &s_{Y_t|T}(Y_t|T) \\
&+(\beta-1) s_{Y_t}(Y_t) \big) \Big],
\end{split}
\label{eq:ib_gradient}
\end{equation}
where $Y_t = f_{\bm \eta}(X) + Z_t$, $s_{Y_t}(\bm y) = \nabla_{\bm y} \log p_{Y_t}(\bm y)$ is the unconditional score function, and $s_{Y_t|T}(\bm y|\tau) = \nabla_{\bm y} \log p_{Y_t|T}(\bm y|\tau)$ is the conditional score function.
\end{proposition}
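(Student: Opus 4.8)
The proof is essentially immediate from the two propositions already established, so the plan is to invoke linearity of the gradient operator and substitute. First I would write $\mathcal{L}_{\text{IB}}(\bm\eta) = I(T;Y_t) - \beta\,I(X;Y_t)$ and apply $\nabla_{\bm\eta}$ term by term, which is justified under the same standard regularity conditions (differentiability of $f_{\bm\eta}$, exchangeability of differentiation and expectation) that underpin Propositions~\ref{prop:info_gradient} and~\ref{prop:task_gradient}. This yields $\nabla_{\bm\eta}\mathcal{L}_{\text{IB}}(\bm\eta) = \nabla_{\bm\eta} I(T;Y_t) - \beta\,\nabla_{\bm\eta} I(X;Y_t)$.

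Next I would substitute the closed forms: from Proposition~\ref{prop:task_gradient}, $\nabla_{\bm\eta} I(T;Y_t) = \E_{X,Z_t}[Df_{\bm\eta}(X)^\top(s_{Y_t|T}(Y_t|T) - s_{Y_t}(Y_t))]$, and from Proposition~\ref{prop:info_gradient}, $\nabla_{\bm\eta} I(X;Y_t) = -\E_{X,Z_t}[Df_{\bm\eta}(X)^\top s_{Y_t}(Y_t)]$. Subtracting $\beta$ times the second from the first and pulling the common factor $Df_{\bm\eta}(X)^\top$ inside a single expectation, the score terms combine as $s_{Y_t|T}(Y_t|T) - s_{Y_t}(Y_t) + \beta\, s_{Y_t}(Y_t) = s_{Y_t|T}(Y_t|T) + (\beta - 1)\,s_{Y_t}(Y_t)$, which is exactly~\eqref{eq:ib_gradient}. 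One should also note, as a consistency check, that both parent gradient formulas were derived using the reparameterization $Y_t = f_{\bm\eta}(X) + Z_t$ with the same sampling of $(X,Z_t)$, so the two expectations are over the same measure and can legitimately be merged.

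There is essentially no hard step here — the only thing requiring a word of care is the sign bookkeeping when combining the $-s_{Y_t}$ contribution from the task gradient with the $-\beta(-s_{Y_t}) = +\beta s_{Y_t}$ contribution from the compression term, but this is routine algebra. For completeness I might also remark that $Df_{\bm\eta}(X)$ here denotes the Jacobian with respect to $\bm\eta$ (matching the convention in Proposition~\ref{prop:info_gradient}), so the VJP implementation of Section~\ref{subsec:vjp} applies verbatim with the combined score vector $s_{\bm\psi}(\bm y_i,\tau_i) + (\beta-1)s_{\bm\phi}(\bm y_i)$ playing the role of $\bm v$ in the stop-gradient inner product.
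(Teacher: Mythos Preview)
Your proposal is correct and follows essentially the same approach as the paper's own proof: decompose $\nabla_{\bm\eta}\mathcal{L}_{\text{IB}}$ by linearity, invoke Propositions~\ref{prop:info_gradient} and~\ref{prop:task_gradient} for the two pieces, and combine the score terms. Your additional remarks on sign bookkeeping, the shared sampling measure, and the VJP implementation are helpful clarifications but go slightly beyond what the paper includes.
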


\begin{proof}
From the definition~\eqref{eq:ib_objective}, we have:
\begin{equation}
\nabla_\eta \mathcal{L}_{\text{IB}} = \nabla_\eta I(T; Y_t) - \beta \nabla_\eta I(X; Y_t).
\label{eq:ib_gradient_decomp}
\end{equation}

Applying Proposition~\ref{prop:task_gradient} (task-oriented gradient), we have
\begin{equation}
\nabla_\eta I(T; Y_t) = \mathbb{E}_{X,Z_t}\left[Df_\eta(X)^\top \left(s_{Y_t|T}(Y_t|T) - s_{Y_t}(Y_t)\right)\right].
\label{eq:task_gradient_ib}
\end{equation}
In a similar manner, applying Proposition~\ref{prop:info_gradient}, we obtain
\begin{equation}
\nabla_\eta I(X; Y_t) = -\mathbb{E}_{X,Z_t}\left[Df_\eta(X)^\top s_{Y_t}(Y_t)\right].
\label{eq:general_gradient_ib}
\end{equation}
Substituting~\eqref{eq:task_gradient_ib} and~\eqref{eq:general_gradient_ib} into~\eqref{eq:ib_gradient_decomp}, we obtain the claimed result~\eqref{eq:ib_gradient}.
\begin{equation}
\begin{split}
\nabla_\eta \mathcal{L}_{\text{IB}} &= \mathbb{E}_{X,Z_t}\left[Df_\eta(X)^\top \left(s_{Y_t|T} - s_{Y_t}\right)\right] \\
&\quad + \beta \mathbb{E}_{X,Z_t}\left[Df_\eta(X)^\top s_{Y_t}\right] \\
&= \mathbb{E}_{X,Z_t}\left[Df_\eta(X)^\top \left(s_{Y_t|T} + (\beta-1) s_{Y_t}\right)\right].
\end{split}
\end{equation}
\end{proof}

\begin{remark}[Reduction to Previous Results] \rm 
Setting $\beta = 0$ in~\eqref{eq:ib_gradient} recovers 
Proposition~\ref{prop:task_gradient} (task-oriented gradient). 
The IB framework thus provides a unified view encompassing both 
general and task-oriented optimization.
\end{remark}

The IB information gradient is evaluated via the VJP identity 
and its implementation follows Subsection~\ref{subsec:task_implementation}
(two-model score learning, stop-gradient).

\section{Numerical Experiments}
\label{sec:experiments}

\subsection{Scalar Linear-Gaussian Channel}

\subsubsection{Analytical Validity Check}
\label{subsec:linear_gaussian}
To assess the correctness of the proposed information gradient formula, 
we consider the analytically tractable setting with a scalar linear-Gaussian channel 
with dimension $n=m=1$:
\begin{align}   
X \sim \mathcal N(0,\sigma_X^2), \ 
Y_t = \alpha X+Z_t, \ 
Z_t \sim \mathcal N(0,t),
\end{align}
with $\alpha\in\mathbb R$ and noise variance $t>0$, i.e.,
the parameter $\alpha$ is treated as the front-end parameter.
In this case, the mutual information is given by
\begin{equation}
\label{eq:MI-LG}
I(X;Y_t)=\frac12\log\!\Big(1+\frac{\alpha^2\sigma_X^2}{t}\Big).
\end{equation}
Differentiating \eqref{eq:MI-LG} with respect to $\alpha$ yields the analytic gradient
\begin{equation}
\label{eq:dI-dalpha-analytic}
\frac{\partial I}{\partial \alpha}
=\frac{\alpha\,\sigma_X^2}{\,t+\alpha^2\sigma_X^2\,}.
\end{equation}

We now specialize our informaiton gradient formula (\ref{eq:gradient_formula})
to the scalar parameterization $f_\eta(x)=\alpha x$ (so that $D f_\eta(X)=X$ and $\eta=\alpha$).
Since $Y_t$ is Gaussian with variance
$v=\mathrm{Var}(Y_t)=\alpha^2\sigma_X^2+t$, its marginal score is 
$s_{Y_t}(y)=\partial_y\log p_{Y_t}(y)=-y/v$. Therefore, we have
\begin{align*}
    \nabla_\eta I(X;Y_t) = \frac{\partial I}{\partial \alpha}
&= -\mathbb E\!\Big[D f_\eta(X)^\top s_{Y_t}(Y_t)\Big] \\
&=-\mathbb E\!\left[X\Big(-\frac{Y_t}{v}\Big)\right] \\
&=\frac{\mathbb E[XY_t]}{v} =\frac{\alpha\,\sigma_X^2}{t+\alpha^2\sigma_X^2},
\end{align*}
which coincides exactly with the analytic result \eqref{eq:dI-dalpha-analytic}. 
This equality verifies that the proposed information gradient expression 
recovers the correct answer in the scalar linaer-Gaussian model.

\subsubsection{Experimental Setup}

To validate the proposed information gradient formula 
and the path-integral route to mutual information, we conduct numerical experiments 
on the scalar linear-Gaussian channel, where analytical solutions are available for comparison.

We consider the scalar linear-Gaussian channel model
described in Subsection~\ref{subsec:linear_gaussian}.
For the numerical experiments,
we set $\sigma_X = 1.0$, $t = 0.5$, and vary $\alpha$
over the interval $[0, 3]$ with 61 equally-spaced points. 
We estimate the information
gradient using the VJP-based score estimator with 
$N = 200{,}000$ Monte Carlo samples.


Applying Proposition \ref{prop:info_gradient} with 
$f_\eta(x) = \alpha x$ 
and $Df_\eta(x) = x$, the gradient
estimator becomes
\begin{equation}
\widehat{\nabla_\alpha I} = -\frac{1}{N} \sum_{i=1}^N X_i 
\cdot s_{Y_t}(Y_{t,i}) =
\frac{1}{N} \sum_{i=1}^N \frac{X_i Y_{t,i}}{\alpha^2 \sigma_X^2 + t},
\end{equation}
where $\{X_i, Y_{t,i}\}_{i=1}^N$ are i.i.d. samples from the joint distribution. Note that, in this case, we use the exact score function $s_{Y_t}(y)$ instead of the estimated score function $s_{\bm \theta}(y)$.

\subsubsection{Path-Integral Reconstruction of Mutual Information}

Given the gradient estimates 
$\{\widehat{\nabla_\alpha I}(\alpha_k)\}_{k=0}^{60}$ at
discrete values $\alpha_0 = 0, \alpha_1, \ldots, \alpha_{60} = 3$, we reconstruct the
mutual information via numerical integration using the trapezoidal rule:
\begin{equation}
\widehat{I}(\alpha_k) = \widehat{I}(\alpha_{k-1}) + \frac{1}{2} \left(
\widehat{\nabla_\alpha I}(\alpha_{k-1}) + \widehat{\nabla_\alpha I}(\alpha_k) \right)
(\alpha_k - \alpha_{k-1}),
\end{equation}
with the initial condition $\widehat{I}(\alpha_0) = 0$ (since $I(X; Y_t) = 0$ when
$\alpha = 0$).

\subsubsection{Results}

Figure~\ref{fig:linear_gaussian_grad} presents the information gradient $\partial I /
\partial \alpha$ versus $\alpha$. The proposed VJP/score estimator (orange
dashed line) closely matches the analytical gradient (blue solid line) across the
entire range of $\alpha$. The gradient exhibits a peak at approximately $\alpha \approx
0.7$, indicating the parameter value at which the rate of information gain is
maximized. As $\alpha$ increases beyond this point, the gradient decreases
monotonically, reflecting diminishing returns in mutual information per unit increase
in channel gain.

Figure~\ref{fig:linear_gaussian_I} shows the mutual information $I(X; Y_t)$ as a
function of $\alpha$. The blue solid line represents the analytical solution, while the
orange dashed line shows the path-integral estimate obtained from the proposed
gradient formula. The two curves are visually indistinguishable, demonstrating that the
numerical integration accurately recovers the mutual information from gradient
estimates.

These results provide a validity check in a setting where ground-truth solutions are
known, confirming that our framework reproduces the correct gradients and mutual
information values. 


\begin{figure}[t]
\centering
\includegraphics[width=0.48\textwidth]{./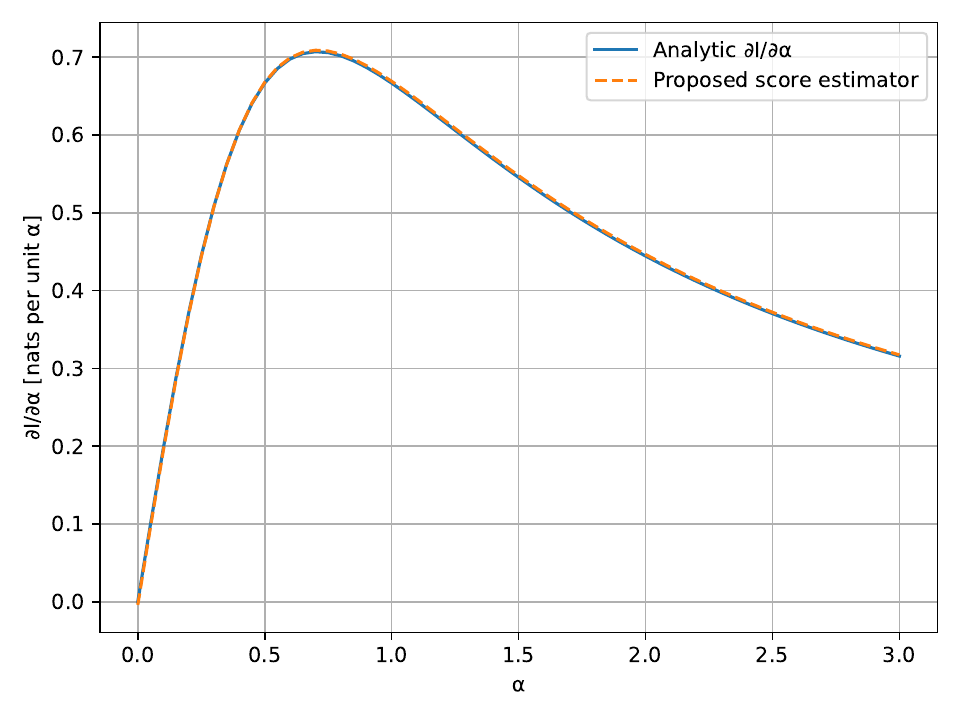}
\caption{Gradient of mutual information $\partial I / \partial \alpha$ versus $\alpha$.
The proposed VJP/score estimator (orange dashed) and the analytical
gradient (blue solid) are presented in the figure.}
\label{fig:linear_gaussian_grad}
\end{figure}

\begin{figure}[t]
    \centering
    \includegraphics[width=0.48\textwidth]{./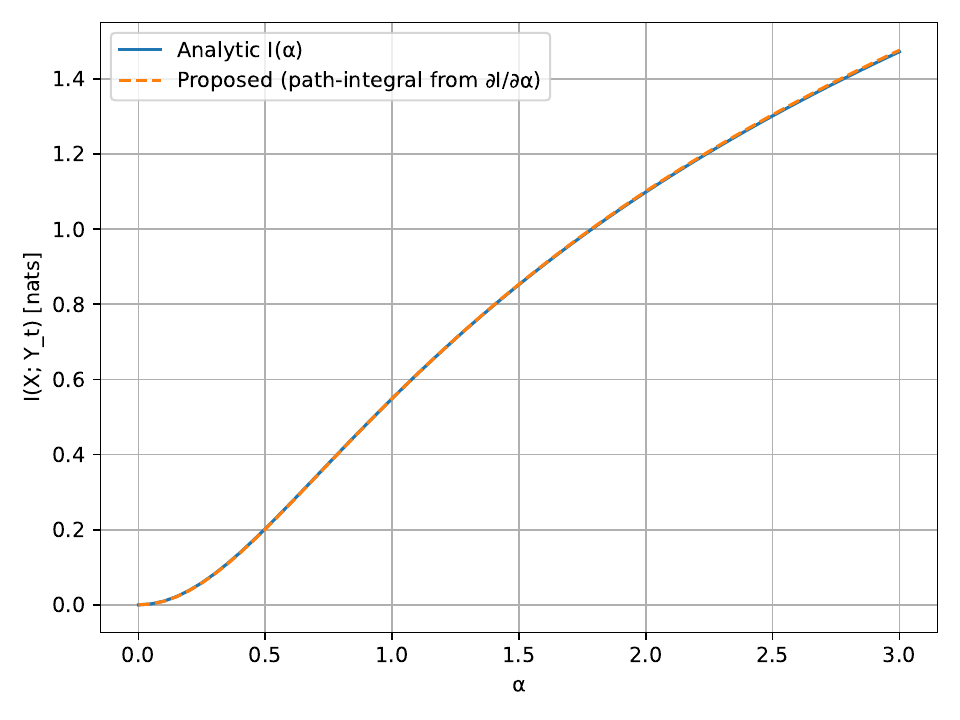}
    \caption{Mutual information $I(X; Y_t)$ versus channel gain $\alpha$ for the
    one-dimensional linear-Gaussian channel. The analytical solution (blue solid) and the
    path-integral estimate from gradient integration (orange dashed).}
    \label{fig:linear_gaussian_I}
\end{figure}

\subsection{Linear Vector Channel with Gaussian Input}
\label{subsec:lin_gauss_alphaA}

We consider the linear vector channel with Gaussian input:
$
Y = \alpha \bm A X + Z,
$
where $X\in\mathbb{R}^{n}$ is the input, 
$\bm A\in\mathbb{R}^{m\times n}$ is a fixed matrix, 
$\alpha\in\mathbb{R}$ is a scalar front-end parameter, 
and $Z\sim\mathcal{N}(0,t \bm I_m)$ with $t>0$ is additive white Gaussian noise independent of $X$. Throughout this subsection we focus on the Gaussian input case
$
X \sim \mathcal{N}(0,\bm \Sigma_x), \ \bm \Sigma_x=\sigma_x^2 \bm I_n \;\;(\sigma_x>0),
$
so that all relevant quantities admit closed form expressions.

\subsubsection{Preparation}
Under the above assumptions, we have
\begin{align}
\bm \Sigma_Y(\alpha) &= \mathrm{Cov}(Y) \\
 &= \alpha^2 \bm A \bm \Sigma_x \bm A^\top + t \bm I_m \\
 &= \alpha^2 \sigma_x^2 \bm A \bm A^\top + t \bm I_m.
\end{align}
We thus have the marginal score function
$
s_Y(\bm y) = \nabla_{\bm y} \log p_Y(\bm y) = -\bm \Sigma_Y(\alpha)^{-1} \bm y.
$

Since $(X,Y)$ are jointly Gaussian, the mutual information admits the standard log-det expression:
\begin{align}
\label{eq:I-alpha-logdet}
I(X;Y|\alpha) 
&= \frac{1}{2}\log\det\!\Big(\bm I_m + \frac{\alpha^2 \sigma_x^2}{t}\, \bm A \bm A^\top\Big).
\end{align}
Let $s_1,\dots,s_r$ be the nonzero singular values of $\bm A$ ($r=\mathrm{rank}(\bm A)$). Then \eqref{eq:I-alpha-logdet} can be written as the scalar sum
\begin{equation}
\label{eq:I-alpha-sum}
I(X;Y|\alpha) = \frac{1}{2}\sum_{i=1}^{r} \log\!\Big(1+\frac{\alpha^2 \sigma_x^2}{t}\, s_i^2\Big).
\end{equation}
Differentiating \eqref{eq:I-alpha-logdet} with respect to $\alpha$ and using the identity
\begin{equation}
\frac{d}{d\alpha}\log\det \bm M(\alpha)=\mathrm{tr}\!\big(\bm M^{-1}(\alpha) \bm M'(\alpha)\big)
\end{equation}
yields the gradient of the mutual information with respect to $\alpha$:
\begin{equation}
\label{eq:dIdalpha-trace}
\frac{\partial I}{\partial \alpha} 
= \frac{\alpha}{t}\,\mathrm{tr}\!\Big( \bm A \bm \Sigma_x \bm A^\top \,\big(\bm I_m + \tfrac{\alpha^2}{t} \bm A \bm \Sigma_x \bm A^\top\big)^{-1} \Big).
\end{equation}
Equivalently, using the singular values, we have
\begin{equation}
\label{eq:dIdalpha-sum}
\frac{\partial I}{\partial \alpha}
= \sum_{i=1}^{r}
\frac{\alpha\, \sigma_x^2 s_i^2}{\,t + \alpha^2 \sigma_x^2 s_i^2\,}.
\end{equation}

Let $f_\alpha(\bm x)=\alpha \bm A \bm x$. The VJP form of the information gradient is given by
\begin{equation}
\label{eq:vjp-estimator}
\frac{\partial I}{\partial \alpha}
= -\mathbb{E}\!\left[\left\langle \frac{\partial f_\alpha(X)}
{\partial \alpha}, s_Y(Y)\right\rangle \right]
= -\mathbb{E}\!\left[\big\langle \bm A X, s_Y(Y)\big\rangle \right],
\end{equation}
with $Y=\alpha \bm A X + Z$ and $s_Y(\bm y)=-\bm \Sigma_Y(\alpha)^{-1} \bm y$. 
In the Gaussian input case, plugging the closed-form $s_Y$ into 
\eqref{eq:vjp-estimator} and taking expectations recovers the analytic gradient 
\eqref{eq:dIdalpha-trace}–\eqref{eq:dIdalpha-sum}.
Practically, \eqref{eq:vjp-estimator} provides 
a Monte Carlo estimator of $\partial I/\partial\alpha$; 
integrating that estimator along $\alpha$ (with $I(\alpha{=}0)=0$) 
numerically reconstructs $I(\alpha)$.

\subsubsection{Experimental Setup}
\label{subsec:exp_lin_gauss}

We consider $X\in\mathbb{R}^n$, $Y\in\mathbb{R}^m$ with $n=m=8$, where $X\sim\mathcal{N}(0,\bm \Sigma_x)$, 
$\bm \Sigma_x=\sigma_x^2 \bm I_n$ with $\sigma_x^2=1$, and $Z\sim\mathcal{N}(0,t \bm I_m)$ with $t=0.5$. 
The front-end function  is $f_\alpha(\bm x)=\alpha \bm A \bm x$. 
The matrix $\bm A$ is generated as $\bm A = \bm U\operatorname{diag}(s)\bm V^\top$ 
using QR-based orthogonal $\bm U,\bm V$ 
and a mildly ill-conditioned geometric spectrum $s$ 
(largest-to-smallest ratio $\approx 10$-$15$). 
The analytic gradient used as ground truth is (\ref{eq:dIdalpha-sum}).
We used the  true score function $s_Y(\bm y)=- \bm \Sigma_Y(\alpha)^{-1} \bm y$ 
with $\bm \Sigma_Y(\alpha)=\alpha^2\sigma_x^2 \bm A \bm A^\top+t \bm I_m$, and estimate
\begin{align}
\frac{\partial I}{\partial\alpha}\approx -
\frac{1}{N}\sum_{k=1}^{N}\langle \bm A X^{(k)}, s_Y(Y^{(k)})\rangle,\quad N=10^5
\end{align}
in the Monte Carlo VJP estimator with the true score function.

For the learned-score VJP (DSM), we used \emph{unconditional} score models $s_\theta(\cdot)$ 
that do not take $\alpha$ or $\sigma$ as inputs; 
instead, an independent model is trained for each $\alpha$ 
(per-$\alpha$ training). 
We adopt DSM method at a \emph{fixed} noise level: 
$\sigma_{\rm train}=\sigma_{\rm eval}=0.1\sqrt t$. 
Each per-$\alpha$ model is a two-hidden-layer multi-layer perceptron (MLP) 
(width $256$, SiLU activations) trained 
for $1000$ epochs with batch size $4096$ 
using AdamW and gradient clipping. 
At test time, we apply a {\em Stein calibration} 
to correct the global scale of the learned score: 
we rescale $s_\theta$ by a scalar $c$ chosen to satisfy 
$\mathbb{E}\!\left[Y^\top \big(c\,s_\theta(Y)\big)\right]\approx 
-m$, which follows from the Gaussian Stein identity 
$\mathbb{E}\!\left[Y^\top s_Y(Y)\right]=-m$. 
Concretely, on the evaluation mini-batch we set
\begin{align}   
c = -\frac{m}{\frac{1}{N}\sum_{k=1}^{N}
Y^{(k)\top}s_\theta\!\big(Y^{(k)}\big)},
\end{align}
and use $c\cdot s_\theta$ in the VJP estimator.

\subsubsection{Results}

\begin{figure}[t]
  \centering
  \includegraphics[width=0.98\linewidth]{./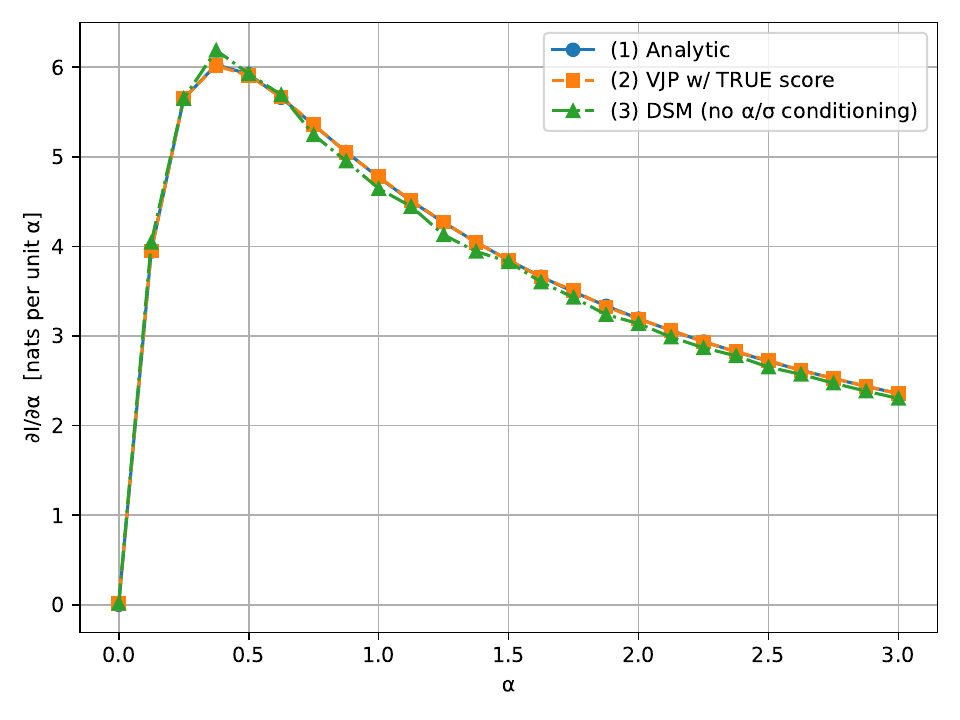}
  \vspace{-0.6em}
  \caption{Comparisons of the information gradient $\partial I/\partial\alpha$ versus $\alpha$ 
  for the linear vector channel with Gaussian input. The analytic gradient (blue solid), the true-score VJP (orange dashed), and the learned-score VJP (green dash-dotted) are compared. }
  \label{fig:grad_curve_peralpha}
\end{figure}

Figure \ref{fig:grad_curve_peralpha} shows the comparisons of 
the information gradient $\partial I/\partial\alpha$ versus $\alpha$ 
for the linear vector channel with Gaussian input. 
The analytic gradient (blue solid), the true-score VJP (orange dashed), 
and the learned-score VJP (green dash-dotted) are compared.
The VJP estimator with the closed form score is indistinguishable 
from the analytic curve up to Monte Carlo fluctuations, 
empirically confirming the VJP identity in this setting.
In our setting ($n=m=8$; $1000$ epochs; hidden $256$), 
the DSM curve visually overlaps with the analytic/true-score curves 
except for small residual noise. 
This can be considered as an strong empirical evidence for 
the effectiveness of the score-based estimation with VJP identity.
This result empirically validates the proposed VJP estimator 
and supports the practical use of learned scores for MI maximization.


\subsection{Gradient-based Optimization via Information Gradients}
\label{sec:mi_design}

Mutual information (MI) provides an intrinsic, task-agnostic measure of the informativeness 
of a sensing/communication front end. When the front-end function is parameterized, 
maximizing MI with respect to the front-end parameters naturally yields 
channel architectures that maximally convey information 
from the input to the output under noise. In this section, 
we use the information gradient to maximize MI.

\subsubsection{Problem Setup}
We consider the linear vector channel with Gaussian input again:
\begin{equation}
\label{eq:channel}
Y \;=\; \bm A X +  Z, \qquad
X \sim \mathcal N(\bm 0, \sigma_x^2 \bm I_n),\quad
Z \sim \mathcal N(\bm 0, t \bm I_m),
\end{equation}
with a parameter matrix $\bm A \in \mathbb R^{m\times n}$ subject to a Frobenius-norm constraint
$
\|\bm A\|_F \le P,
$
where $P$ is a positive constant.
In this case, the mutual information is given by
\begin{equation}
\label{eq:I-logdet}
I(\bm A) = \frac{1}{2}\log\det\!\Big(\bm I_m + \frac{\sigma_x^2}{t}\,\bm A\bm A^\top\Big).
\end{equation}
Our goal in this subsection is the constrained maximization:
\begin{equation}
\label{eq:opt-A}
\text{maximize}_{\bm A}\; I(\bm A)\quad \text{s.t.}\quad \|\bm A\|_F\le P,
\end{equation}
by using the information gradient.

\subsubsection{Preparation}
Assume $\bm\Sigma_x=\sigma_x^2\bm I_n$ as in \eqref{eq:channel}. 
Then, we have
\begin{align}
I(\bm A)
&=\frac12\log\det\!\Big(\bm I_m+\frac{\sigma_x^2}{t}\,\bm A\bm A^\top\Big) \\
&=\frac12\sum_{i=1}^{m}\log\!\big(1+c\,s_i^2\big),\quad c\equiv\sigma_x^2/t,
\end{align}
where $s_1,\dots,s_m$ are the singular values of $\bm A$ (padding with zeros if $m>n$). The constraint is
$\sum_{i=1}^m s_i^2=\|\bm A\|_F^2\le P^2$.
Since $x\mapsto \log(1+c\,x)$ is concave and symmetric 
in the variables $\{s_i^2\}$, the sum is Schur-concave; by Jensen's inequality 
(or KKT conditions), the maximizer allocates energy uniformly:
\[
s_1^2=\cdots=s_m^2=\frac{P^2}{m}.
\]
Therefore the maximal mutual information equals
\begin{equation}
\label{eq:theory-optimum}
I^\star
=\frac{m}{2}\log\!\Big(1+\frac{\sigma_x^2}{t}\,\frac{P^2}{m}\Big).
\end{equation}


For solving the optimization problem \eqref{eq:opt-A}, 
we use the {\em projected gradient ascent method} based on the alternating optimization
described in Subsection III-B.

In a projected gradient ascent step, we perform the following steps:
given the current $\bm A^{(k)}$, 
\begin{enumerate}
\item Fit $s_{\bm \theta^{(k)}}$ on samples from \eqref{eq:channel} generated with $\bm A^{(k)}$ using DSM learning.
\item Form a Monte Carlo estimate of \eqref{eq:grad-vjp}:
\begin{equation}
\label{eq:grad-vjp}
\widehat{\nabla}_{\!\bm A} I(\bm A^{(k)}) \;=\; -\frac{1}{N}\sum_{i=1}^N s_{\bm \theta^{(k)}}(Y^{(i)})\,X^{(i)\top}.
\end{equation}
\item Update $\bm A^{(k+1)}$ with a \emph{fixed} 
step size $\gamma >0$ and project onto the Frobenius ball of radius $P$:
\begin{equation}
\label{eq:grad_update}
\bm A^{(k+1)} = \Pi_{\{\|\cdot\|_F\le P\}}\!\Big(\bm A^{(k)} 
+ \gamma \widehat{\nabla}_{\!\bm A} I(\bm A^{(k)})\Big),
\end{equation}
where $\Pi_{\{\|\cdot\|_F\le P\}}$ is the projection operator onto the Frobenius ball of radius $P$.
\end{enumerate}

\subsubsection{Experimental Setup}
We consider the linear vector channel with Gaussian input \eqref{eq:channel} 
with $n=m=8$, $\sigma_x^2=1$, $t=0.5$, 
and optimize $\bm A\in\mathbb R^{m\times n}$ 
under the Frobenius constraint $\|\bm A\|_F\le P$ with $P=5.0$. 
We run the alternating projected ascent with an \emph{unconditional} score model 
$s_{\bm \theta}:\mathbb R^m\to\mathbb R^m$ (input $\bm y$ only), 
trained by DSM at a \emph{fixed} noise level 
$\sigma_{\rm train}=\sigma_{\rm eval}=0.1\sqrt t$. 
At each outer iteration $k$, we train $s_{\bm \theta^{(k)}}$ for $1000$ 
epochs (two-layer MLP, width $256$, SiLU, batch size $4096$, AdamW with gradient clipping), 
estimate the VJP gradient with $N=5\times 10^4$ 
Monte Carlo samples and a Stein calibration 
factor applied at evaluation, then the gradient update
\eqref{eq:grad_update} is carried out.

The mutual information $I(\bm A^{(k)})$ used for reporting is computed from the log-det form \eqref{eq:I-logdet}. As a reference, we plot the theoretical optimum $I^\star$ in \eqref{eq:theory-optimum} as a horizontal line.

\subsubsection{Results}
\begin{figure}[t]
  \centering
  \includegraphics[width=0.9\linewidth]{./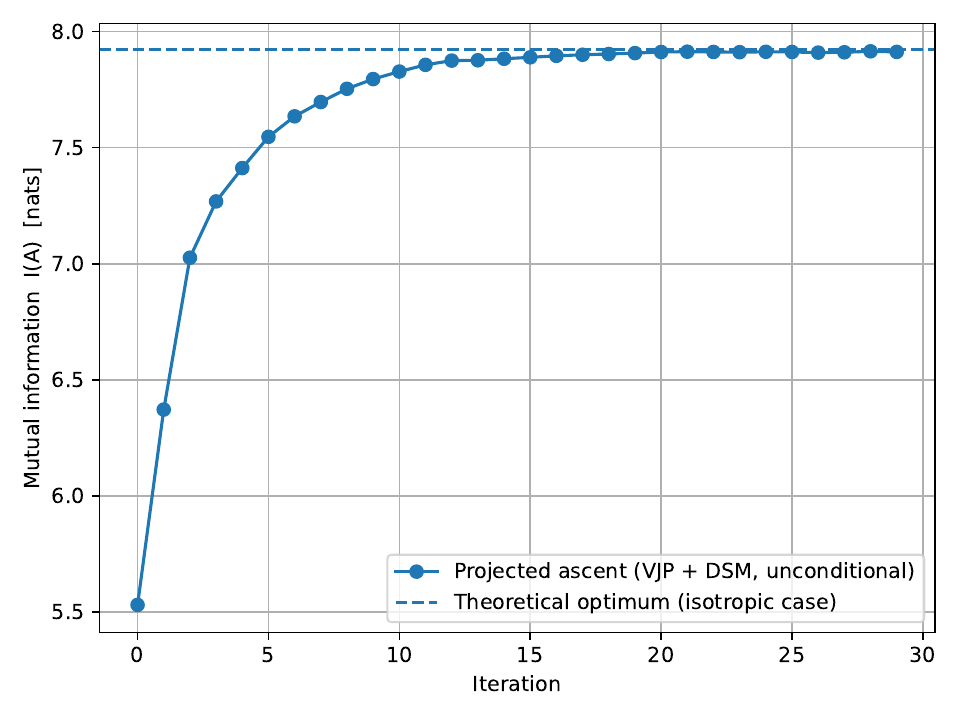}
  \vspace{-0.6em}
  \caption{Projected gradient ascent with unconditional DSM (fixed step).
  Iteration vs. mutual information $I(\bm A^{(k)})$ for $n{=}m{=}8$, $\sigma_x^2{=}1$, $t{=}0.5$, $\|\bm A\|_F\le 5.0$.
  The dashed line is the theoretical optimum $I^\star$ under the Frobenius constraint.}
  \label{fig:mi_iter_curve}
\end{figure}

Figure~\ref{fig:mi_iter_curve} shows a steady increase of $I(\bm A^{(k)})$ toward the theoretical optimum $I^\star$, 
confirming that (i) the VJP estimator coupled with 
an unconditional DSM score yields a usable ascent direction, 
and (ii) the projected update effectively respects the Frobenius
 constraint while improving information. 
 Minor fluctuations stem from Monte Carlo noise 
 and the finite-epoch training of the score model; they can be reduced by increasing $N$ 
 or epochs.
 The result supports the practicality of the proposed score-based information 
 gradients for mutual information maximization.

 \subsection{MI Maximization for Nonlinear Channels}

 \subsubsection{Experimental Setup}

 We consider the nonlinear channel
 \[
 Y=\tanh(\bm AX)+Z,\qquad X\sim\mathcal{N}(0,\bm I_n),
 \quad Z\sim\mathcal{N}(0,t \bm I_n),
 \]
 with dimension fixed to $n=12$. 
 The design matrix is constrained by the Frobenius norm
$\|\bm A\|_{\mathrm F}\le P$ with $P=5.0$.
We initialize $\bm A$ by a random Gaussian matrix scaled 
 to satisfy $\|\bm A\|_{\mathrm F}=P$.

 To stabilize the numerical evaluation of $I(X;Y)$, we estimate 
 the output entropy $H(Y)$ using a leave-one-out (LOO) Gaussian KDE 
 in a whitened space \cite{moon1995}.
  The KDE bandwidth is chosen by maximizing the LOO log-likelihood 
  over a small grid around Scott's rule-of-thumb.
 
 Noise variance $t$, the Frobenius-radius $P$, batch sizes, seeds, 
 early-stopping criteria follow the previousl experiment without 
 change and are omitted here for brevity. As the theoretical 
 optimum for this nonlinear channel is unknown, 
 we report only the estimated MI.
 
 \subsubsection{Results}
 Figure~\ref{fig:mi_tanhA} shows the estimated MI trajectory 
 obtained by gradient ascent processes. Although small fluctuations 
 are visible across iterations, the overall trend is clearly increasing, 
 indicating that the projected gradient-ascent with the information 
 gradient successfully drives the parameter updates in the correct (MI-maximizing) 
 direction. This behavior is consistent with our expectation 
 that the nonlinearity $\tanh(\cdot)$ does not prevent ascent 
 when gradients are estimated stably, while the KDE-based MI evaluation provides 
 a robust external metric for progress monitoring. 

 \begin{figure}[t]
   \centering
   \includegraphics[width=0.9\linewidth]{./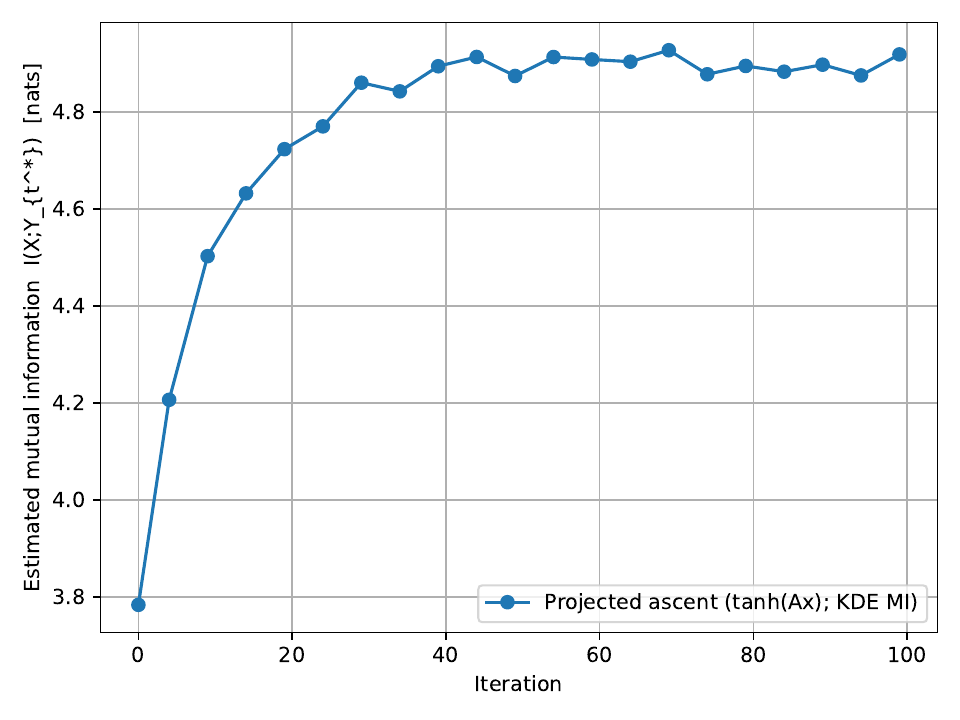}
   \caption{MI estimate versus iteration for $Y=\tanh(AX)+Z$ ($n{=}12$).}
   \label{fig:mi_tanhA}
 \end{figure}

 \subsection{Task-Oriented IB Optimization in Linear Vector Channel with Gaussian Input}

\subsubsection{Experimental setup}
We consider a linear vector channel with Gaussian input and a linear task variable:
\begin{align}
  Y &= \bm A X +  Z,\qquad  Z \sim \mathcal{N}(0, t \bm I_m),\\
  T &= \bm W X,\qquad  X \sim \mathcal{N}(0,\bm \Sigma_x),
\end{align}
where $\bm A\in\mathbb{R}^{m\times n}$ is the channel matrix subject to a Frobenius-norm constraint $\|\bm A\|_F \le P$, 
and $\bm W\in\mathbb{R}^{k\times n}$ defines the task.
The parameter $t>0$ denotes the noise variance. 
The objective is the IB criterion:
\begin{align}
  L_{\mathrm{IB}}(A) = I(T;Y) - \beta I(X;Y),
\end{align}
with trade-off parameter $\beta>0$. We maximize $L_{\mathrm{IB}}$ over $\bm A$ by projected gradient ascent.

For each iterate $\bm A$, mutual informations are evaluated exactly in closed form:
\begin{align}
  I(X;Y) &= \frac12 \log\det\!\Big(\bm I_m + \tfrac{1}{t}\, \bm A \bm \Sigma_x \bm A^\top\Big),\\
  I(T;Y) &= \frac12 \log\det\!\Big(\bm I_k + \bm \Sigma_T^{-1}\, \bm \Sigma_{TY}\, \bm \Sigma_Y^{-1}\, \bm \Sigma_{YT}\Big),
\end{align}
where $\bm \Sigma_T = \bm W\bm \Sigma_x \bm W^\top$, $\bm \Sigma_Y = \bm A\bm \Sigma_x \bm A^\top + t \bm I_m$, and $\bm \Sigma_{TY} = \bm W\bm \Sigma_x \bm A^\top$.

Let $s_Y(\bm y) = \nabla_{\bm y} \log p_Y(\bm y)$ and $s_{Y|T}(\bm y|\bm t) = \nabla_{\bm y} \log p_{Y|T}(\bm y|\bm t)$.
By the IB gradient identity, an ascent direction is given by
\begin{align}           
  \nabla_A L_{\mathrm{IB}} \propto
  \mathbb{E}\!\big[\underbrace{s_{Y|T}(Y\!\mid T)}_{\text{(cond.)}}+(\beta-1)\underbrace{s_Y(Y)}_{\text{(uncond.)}}\big]\, X^\top.
\end{align}    
In this experiment we \emph{learn both scores} by DSM learning.

The dimension is set to $n=m=12$.
Unless otherwise noted, we fix the input covariance to 
$\bm \Sigma_x=\bm I_{12}$ and draw the task matrix $\bm W\in\mathbb{R}^{k\times 12}$ once at random (row full-rank).
In our runs we set $k=4$, noise variance $t=0.5$, trade-off $\beta=1.0$, and impose the Frobenius constraint $\|A\|_F\le P$ with $P=5.0$.
The DSM perturbation level is $\sigma=0.1$ for both unconditional and conditional score learners.
At each outer iteration, we perform $N_{\mathrm{DSM}}=200$ DSM updates for each score network with Adam (learning rate $10^{-3}$) on mini-batches of size $B=512$.
The ascent step size for $A$ is $\gamma=0.05$, followed by projection onto the Frobenius ball.

\subsubsection{Results}
Figure~\ref{fig:ib_both_dsm_curve_n12} reports the three tracked metrics per iteration:
the IB objective $L_{\mathrm{IB}}=I(T;Y)-\beta I(X;Y)$ and the two mutual informations $I(T;Y)$ and $I(X;Y)$ (all in nats), 
each evaluated in closed form at the current $\bm A$.
As predicted by the IB-gradient ascent, $L_{\mathrm{IB}}$ exhibits a consistent increase and stabilizes after a transient phase.
Concurrently, $I(T;Y)$ grows as the channel aligns with task-relevant directions, while $I(X;Y)$ is suppressed by the $\beta$-penalty and the Frobenius constraint, resulting in a characteristic trade-off curve over iterations.

This result provides strong empirical validation for our proposed IB 
gradient (Proposition~\ref{prop:ib_gradient}). It demonstrates 
that the VJP-based gradient estimator, using \emph{only} learned scores for both 
$s_Y$ and $s_{Y|T}$, provides a practical and effective ascent direction. 
This is clearly evidenced by the optimization successfully navigating the 
IB trade-off: maximizing the objective $\mathcal{L}_{IB}$ by preserving 
task-relevant information $I(T;Y)$ while actively compressing the total information $I(X;Y)$.

\begin{figure}[t]
  \centering
  \includegraphics[width=0.9\linewidth]{./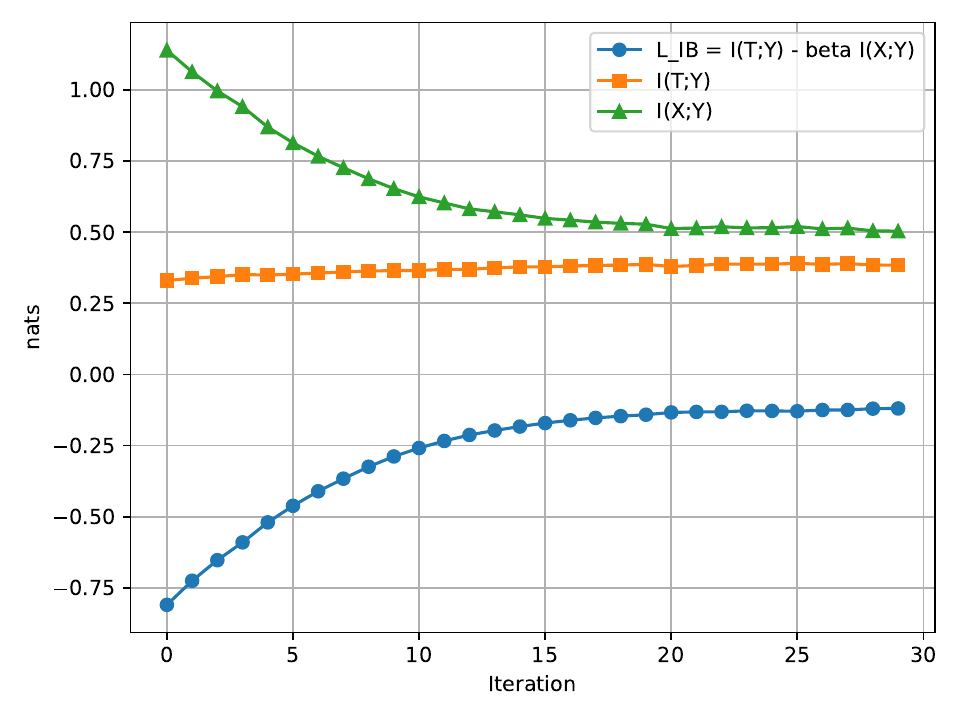}
  \caption{Mutual information in task-oriented IB optimization with $n=m=12$ and $k=4$.
  Both unconditional and conditional scores are learned by DSM; $I(T;Y)$ and $I(X;Y)$ are computed in closed form at each iteration.
  Curves show $L_{\mathrm{IB}}$, $I(T;Y)$, and $I(X;Y)$ (nats) 
  versus iteration under the Frobenius constraint $\|\bm A\|_F\le 5.0$.
  }
  \label{fig:ib_both_dsm_curve_n12}
\end{figure}

\section{Conclusion}
Leveraging the score-to-Fisher bridge methodology \cite{wadayama2025}, 
we proposed a tractable and efficient framework for optimizing parametric nonlinear Gaussian channels 
by maximizing information-theoretic objectives. The core of our contribution is a 
novel formula for the information gradient, $\nabla_{\eta}I(X;Y_t)$, 
derived by using the marginal score function $s_{Y_t}$. 
This approach avoids intractable marginal computations and relies only on 
DSM learning and forward channel sampling.
We extended this gradient formula to task-oriented objectives 
$\nabla_{\eta}I(T;Y_t)$ (Proposition~\ref{prop:task_gradient}) and, 
more generally, to the IB objective 
$\nabla_{\eta}\mathcal{L}_{IB}$ (Proposition~\ref{prop:ib_gradient}), 
where $\mathcal{L}_{IB} = I(T;Y_t) - \beta I(X;Y_t)$. 
The resulting gradient estimators are implemented efficiently 
using the VJP identity, enabling practical 
end-to-end optimization via alternating gradient ascent.
Future directions include multi-terminal extensions, 
integration with finite-length analysis, rate-distortion theory,
and applications to emerging 
6G communication paradigms such as semantic and goal-oriented communications \cite{Gunduz2023}.

\section*{Acknowledgments}
This work was supported by JST, CRONOS, Japan Grant Number JPMJCS25N5.






\begin{thebibliography}{00}

\bibitem{moon1995}
Y.-I.~Moon, B.~Rajagopalan, and U.~Lall,
``Estimation of mutual information using kernel density estimators,''
\textit{Physical Review E}, vol.~52, no.~3, pp.~2318--2321, 1995.
doi: \href{https://doi.org/10.1103/PhysRevE.52.2318.}{10.1103/PhysRevE.52.2318.}
  

\bibitem{wadayama2025}
T.~Wadayama, 
``Mutual information estimation via score-to-Fisher bridge
for nonlinear Gaussian noise channels,''\\
\url{https://arxiv.org/abs/2510.05496}, 2025.

\bibitem{hyvarinen2005}
A.~Hyv\"{a}rinen,
``Estimation of non-normalized statistical models by score matching,''
\textit{Journal of Machine Learning Research}, vol.~6, pp.~695--709, 2005.\\
Available: \url{https://jmlr.org/papers/volume6/hyvarinen05a/hyvarinen05a.pdf}.

\bibitem{vincent2011}
P.~Vincent,
``A connection between score matching and denoising autoencoders,''
\textit{Neural Computation}, vol.~23, no.~7, pp.~1661--1674, Jul.~2011.
doi: \href{https://doi.org/10.1162/NECO\_a\_00142.}{10.1162/NECO\_a\_00142.}

\bibitem{song2019}
Y.~Song and S.~Ermon,
"Generative modeling by estimating gradients of the data distribution,''
in \textit{Advances in Neural Information Processing Systems (NeurIPS)}, 2019, pp.~11895--11907.
arXiv:1907.05600.

\bibitem{song2021}
Y.~Song, J.~Sohl-Dickstein, D.~P.~Kingma, A.~Kumar, S.~Ermon, and B.~Poole,
``Score-based generative modeling through stochastic differential equations,''
in \textit{Advances in Neural Information Processing Systems (NeurIPS)}, vol.~34, 2021, pp.~30470--30480.
arXiv:2011.13456.

\bibitem{Baydin2018}
A.~G. Baydin, B.~A. Pearlmutter, A.~A. Radul,  and J.~M. Siskind, ``Automatic
  differentiation in machine learning: a survey,'' \emph{Journal of Machine
  Learning Research}, vol.~18, pp. 5595--5637, 2018.

\bibitem{Gunduz2023}
D.~G{\"u}nd{\"u}z, Z.~Qin, I.{\~n}aki~Estella~Aguerri, H.~S.~Dhillon, Z.~Yang, A.~Yener, K.-K.~Wong, and C.-B.~Chae,
``Beyond transmitting bits: context, semantics, and task-oriented communications,''
\emph{IEEE Journal on Selected Areas in Communications}, vol.~41, no.~1, pp.~5--41, Jan.~2023., 
doi: 10.1109/JSAC.2022.3223408.

\bibitem{Tishby1999}
N.~Tishby, F.~C.~Pereira, and W.~Bialek, ``The information bottleneck method,''
in \emph{Proc. 37th Annu. Allerton Conf. Commun., Control, and Computing},
Monticello, IL, USA, 1999, pp.~368--377.

\bibitem{fisher1922}
R.~A.~Fisher,
``On the Mathematical Foundations of Theoretical Statistics,''
\textit{Philosophical Transactions of the Royal Society of London. Series A},
vol.~222, pp.~309--368, 1922.
doi: \href{https://doi.org/10.1098/rsta.1922.0009.}{10.1098/rsta.1922.0009.}

\bibitem{cover2006}
T.~M.~Cover and J.~A.~Thomas,
\textit{Elements of Information Theory}, 2nd~ed.
Hoboken, NJ, USA: Wiley, 2006. ISBN: 978-0-471-24195-9.


\bibitem{stam1959}
A.~J.~Stam,
``Some inequalities satisfied by the quantities of information of Fisher and Shannon,''
\textit{Information and Control}, vol.~2, no.~2, pp.~101--112, Jun.~1959.
doi: \href{https://doi.org/10.1016/S0019-9958(59)90348-1}{10.1016/S0019-9958(59)90348-1.}

\bibitem{barron1986entropy}
A.~R.~Barron,
``Entropy and the central limit theorem,''
\textit{The Annals of Probability}, vol.~14, no.~1, pp.~336--342, Jan.~1986.
doi: \href{https://doi.org/10.1214/aop/1176992632.}{10.1214/aop/1176992632.}



\end{thebibliography}
\end{document}